\newtheorem{definition}{Definition}
\newtheorem{theorem}{Theorem}
\newtheorem{remark}{Remark}
\title{\LARGE \bf Collaborative Visual Area Coverage using Unmanned Aerial Vehicles}
\author{Sotiris Papatheodorou, Anthony Tzes and Yiannis Stergiopoulos
	\thanks{This work has received funding from the European Union's Horizon 2020 Research and Innovation Programme under the Grant Agreement No.644128, AEROWORKS.}
\thanks{The authors are with the Electrical \& Computer Engineering Department, University of Patras, Rio, Achaia 26500, Greece. Corresponding author's email: \small\tt{tzes@ece.upatras.gr}}
}
\begin{document}
	\maketitle
	\thispagestyle{empty}
	\pagestyle{empty}
	%%%%%%%%%%%%%%%%%%%%%%%%%%%%%%%%%%%%%%%%%%%%%%%%%%%%%%%%%%%%%%%%%%%%%%%%%%%%%%%%
	\begin{abstract}
	This article addresses the visual area coverage problem using a team of Unmanned Aerial Vehicles (UAVs). The UAVs are assumed to be equipped with a downward facing camera covering all points of interest within a circle on the ground. The diameter of this circular conic-section increases as the UAV flies at a larger height, yet the quality of the observed area is inverse proportional to the UAV's height. The objective is to provide a distributed control algorithm that maximizes a combined coverage-quality criterion by adjusting the UAV's altitude. Simulation studies are offered to highlight the effectiveness of the suggested scheme.
	\end{abstract}
	
	\begin{keywords}
		Cooperative Control, Autonomous Systems, Unmanned Aerial Vehicles
	\end{keywords}

%%%%%%%%%%%%%%%%%%%%%%%%%%%%%%%%%%%%%%%%%%%%%%%%%%%%%%%%%%%%%%%%%%%%%%%%%%%%%%%%
\section{Introduction}
	Area coverage over a planar region by ground agents has been studied extensively when the sensing patterns of the robots are circular \cite{Cortes_ESAIMCOCV05,Pimenta_CDC08}. Most of these techniques are based on a Voronoi or similar partitioning~\cite{Stergiopoulos_IETCTA10,Arslan_ICRA2016} of the region of interest. However there have been studies concerning arbitrary sensing patterns~\cite{Stergiopoulos_Automatica13,Kantaros_Automatica15} where the partitioning is not Voronoi based~\cite{Stergiopoulos_ICRA14}. Both convex and non-convex domains have been examined~\cite{Stergiopoulos_IEEETAC15}.
	
	Many algorithms have been developed for the mapping by UAVs~\cite{Renzaglia_IJRR12,Breitenmoser_IROS10,Thanou_ISCCSP14} relying mostly in Voronoi-based tessellations. Extensive work has also been done in area monitoring by UAVs equipped with cameras~\cite{Schwager_ICRA2009,Schwager_IEEE2011}. In these pioneering research efforts, there is no maximum allowable height that can be reached by the UAVs and for the case where there is overlapping of the covered areas, this is considered an advantage compared to the same area viewed by a single camera.
	
	In this paper the persistent coverage problem of a planar region by a network of UAVs is considered. The UAVs are assumed to have downwards facing visual sensors with a circular sensing footprint, while no assumptions are made about the type of the sensor. The covered area as well as the coverage quality of that area depend on the altitude of each UAV. A partitioning scheme of the sensed region, similar to~\cite{Stergiopoulos_ICRA14}, is employed and a gradient based control law is developed. This control law leads the network to a locally optimal configuration with respect to a combined coverage-quality criterion, while also guaranteeing that the UAVs remain within a desired range of altitudes. This article, compared to~\cite{Schwager_IEEE2011}  assumes the flight of the UAVs within a given regime and attempts to provide visual information of an area using a single UAV-camera.
	
	The problem statement, along with the definition of the coverage-quality criterion and the sensed space partitioning scheme are presented in Section \ref{section:problem_statement}. In Section \ref{section:distributed_control_law}, the distributed control law is derived and its main properties are explained. In Section \ref{section:fi_examples}, the partitioning scheme and the control law are examined for two particular coverage quality functions. Finally, in Section \ref{section:simulation_studies},  simulations are provided to show the efficiency of the proposed technique.

%%%%%%%%%%%%%%%%%%%%%%%%%%%%%%%%%%%%%%%%%%%%%%%%%%%%%%%%%%%%%%%%%%%%%%%%%%%%%%%%
\section{Problem Statement}
\label{section:problem_statement}

	\subsection{Main assumptions - Preliminaries}
	Let $\Omega \subset \mathbb{R}^2$ be a compact convex region under surveillance. Assume a swarm of $n$ UAVs, each positioned at the spatial coordinates $X_i = \left[ x_i,y_i, z_i \right]^T, ~i \in I_n$, where $I_n = \left\{ 1, \dots ,n \right\}$. We also define the vector $q_i = [ x_i ~ y_i ]^T, ~q_i \in \Omega$ to note the projection of the center of each UAV on the ground. The minimum and maximum altitudes each UAV can fly to are $z_i^{\min}$ and $z_i^{\max}$ respectively, thus $z_i \in [z_i^{\min}, ~z_i^{\max}], i \in I_n$. 
	
	The simplified UAV's kinodynamic model is 
	\begin{eqnarray}
		\nonumber
		\dot{q}_i &=& u_{i,q},~~q_i \in \Omega, ~u_{i,q} \in \mathbb{R}^2,\\
		\dot{z}_i &=& u_{i,z},~~z_i \in [z_i^{\min} ~,~ z_i^{\max}], ~u_{i,z}\in \mathbb{R}.
		\label{kinematics}
	\end{eqnarray}
	where $\left[u_{i,q}, u_{i,z}\right]$ is the corresponding `thrust' control input for each robot~(node). The minimum altitude $z_i^{\min}$ is used to ensure the UAVs will fly above ground obstacles, whereas the maximum altitude $z_i^{\max}$ guarantees that they will not fly out of range of their base station. In the sequel, all UAVs are assumed to have common minimum $z^{\min}$ and maximum $z^{\max}$ altitudes.

	As far as the sensing performance of the UAVs ~(nodes) is concerned, all members are assumed to be equipped with identical downwards pointing sensors with conic sensing patterns. Thus the region of $\Omega$ sensed by each node is a disk defined as
	\begin{equation}
		C_{i}^{s}(X_i,a) = \left\{ q \in \Omega: \parallel q-q_i\parallel \leq z_i~ \tan a \right\},~i=1,\ldots,n,
		\label{sensing}
	\end{equation}
	where $a$ is half the angle of the sensing cone. As shown in Figure \ref{fig:problem_statement}, the higher the altitude of a UAV, the larger the area of $\Omega$ surveyed by its sensor. 
	
	The coverage quality of each node is in the general case a function $f_i(q)\colon \mathbb{R}^2 \rightarrow [0, 1]$ which is dependent on the node's position $X_i$ as well as its altitude constraints $z^{\min}, z^{\max}$ and the angle $a$ of its sensor. The higher the value of $f_i(q)$, the better the coverage quality of point $q \in \mathbb{R}^2$, with $f_i = 1$ when $z_i = z^{\min}$ and $f_i = 0$ when $z_i = z^{\max}$. It is assumed that as the altitude of a node increases, the visual quality of its sensed area decreases. The properties of $f_i$ are examined in Section \ref{section:fi_examples} along with some example functions. 

	For each point $q \in \Omega$, an importance weight is assigned via the space density function $\phi \colon \Omega \rightarrow \mathbb{R}^+$, encapsulating any a priori information regarding the region of interest. Thus the coverage-quality objective is
	\begin{equation}
	\mathcal{H} \stackrel{\triangle}{=} \int_{\Omega} \max_{i \in I_n} f_i(q) ~\phi(q) ~dq.
	\end{equation}
	In the sequel, we assume $\phi(q) = 1, ~\forall q \in \Omega$ but the expressions can be easily altered to take into account any a priori weight function.

	\begin{figure}[htb]
		\centering
		\includegraphics[width=0.3\textwidth]{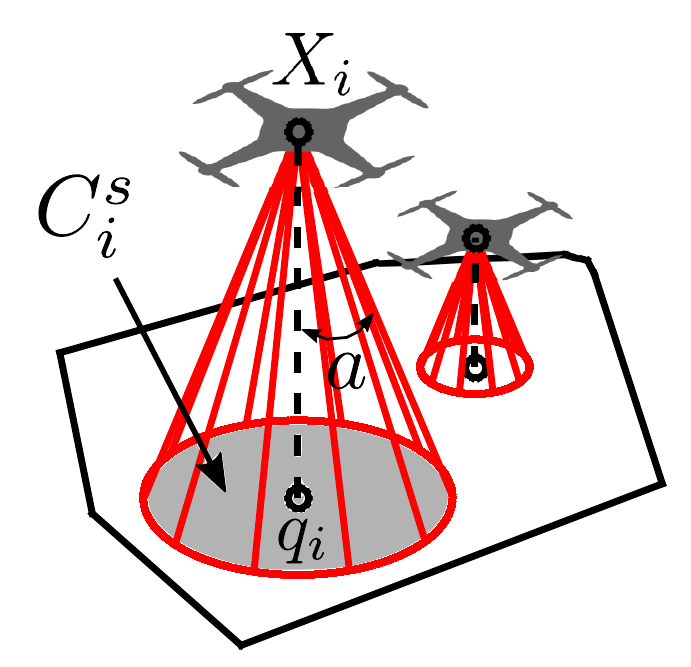}
		\caption{UAV-visual area coverage concept}
		\label{fig:problem_statement}
	\end{figure}

	\subsection{Sensed space partitioning}
	\label{section:partitioning}
	The assignment of responsibility regions to the nodes is done in a manner similar to \cite{Stergiopoulos_ICRA14}. Only the subset of $\Omega$ sensed by the nodes is partitioned. Each node is assigned a cell
	\begin{equation}
	W_i \stackrel{\triangle}{=} \left\{q \in \Omega \colon f_i(q) \geq f_j(q), ~j \neq i \right\}
	\label{partitioning}
	\end{equation}
	with the equality holding true only at the boundary $\partial W_i$, so that the cells $W_i$ comprise a complete tessellation of the sensed region.
	
	\begin{definition}
	We define the neighbors $N_i$ of node $i$ as
	\begin{equation*}
	N_i \stackrel{\triangle}{=} \left\{ j \neq i \colon C_j^s \cap C_i^s \neq \emptyset \right\}.
	\end{equation*}
	The neighbors of node $i$ are those nodes that sense at least a part of the region that node $i$ senses. It is clear that only the nodes in $N_i$ need to be considered when creating $W_i$.
	\end{definition}
	\begin{remark}
	The aforementioned partitioning is a complete tessellation of the sensed region $\bigcup_{i \in I_n} C_i^s$, although it is not a complete tessellation of $\Omega$. Let us denote the neutral region not assigned by the partitioning scheme as $\mathcal{O} = \Omega \setminus \bigcup_{i \in I_n} W_i$.
	\end{remark}
	\begin{remark}
	The resulting cells $W_i$ are compact but not necessarily convex. It is also possible that a cell $W_i$ consists of multiple disjoint regions, such as the cell of node $1$ shown in yellow in Figure \ref{fig:disjoint_empty_cells}. It is also possible that the cell of a node is empty, such as node $8$ in Figure \ref{fig:disjoint_empty_cells}. Its sensing circle $\partial C_8^s$ is shown in a dashed black line.
	\end{remark}
	
	\begin{figure}[htb]
		\centering
		\includegraphics[width=0.3\textwidth]{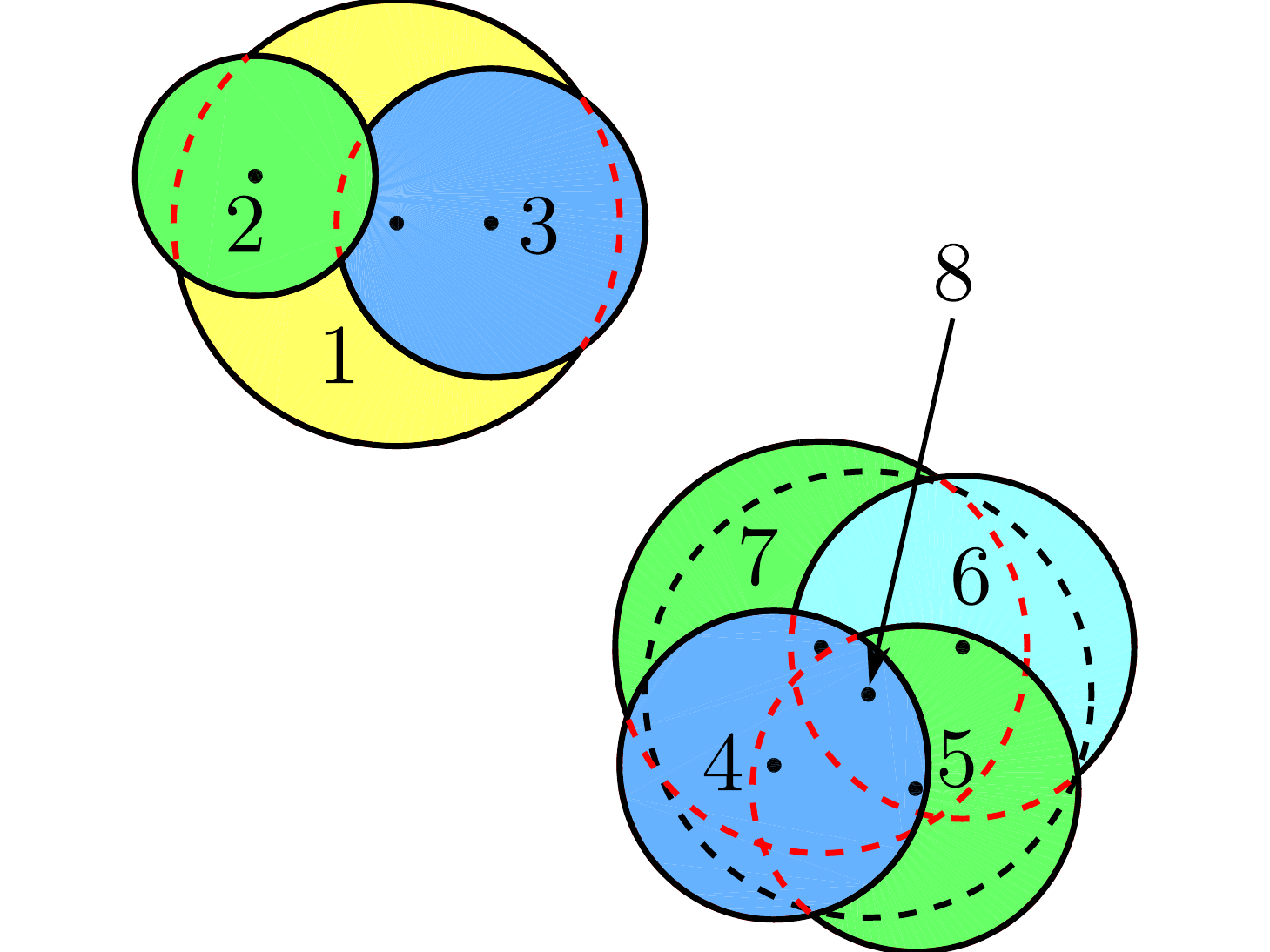}
		\caption{Examples of an empty cell (node 8) and one consisting of two disjoint regions (node 1).}
		\label{fig:disjoint_empty_cells}
	\end{figure}
		
	By utilizing this partitioning scheme, the network's coverage performance can be written as
	\begin{equation}
	\mathcal{H} = \sum_{i \in I_n} \int_{W_i} f_i(q) ~\phi(q) ~dq.
	\label{criterion}
	\end{equation}

%%%%%%%%%%%%%%%%%%%%%%%%%%%%%%
\section{Spatially Distributed Coordination Algorithm}
%%%%%%%%%%%%%%%%%%%%%%%%%%%%%%
\label{section:distributed_control_law}

	Based on the nodes kinodynamics (\ref{kinematics}), their sensing performance (\ref{sensing}) and the coverage criterion (\ref{criterion}), a gradient based control law is designed. The control law utilizes the partitioning (\ref{partitioning}) and result in monotonous increase of the covered area.
	\begin{theorem}
	In a UAV visual network consisting of nodes with sensing performance as in (\ref{sensing}), governed by the kinodynamics in (\ref{kinematics})
	 and the space partitioning described in Section \ref{section:partitioning}, the control law
	\begin{eqnarray}
		\nonumber
		u_{i,q} &=& \alpha_{i,q} \left[ ~\int\limits_{\partial W_i \cap \partial \mathcal{O}} n_i ~f_i(q) ~dq ~+ \int\limits_{W_i} \frac{\partial f_i(q)}{\partial q_i} ~dq ~+ \right.\\
		& & \left. \sum\limits_{j \neq i} ~\int\limits_{\partial W_i \cap \partial W_j} \upsilon_i^i ~n_i ~(f_i(q) - f_j(q)) ~dq \right]\\
		\nonumber
		u_{i,z} &=& \alpha_{i,z} \left[ ~\int\limits_{\partial W_i \cap \partial \mathcal{O}} \tan(a) ~f_i(q) ~dq ~+ \int\limits_{W_i} \frac{\partial f_i(q)}{\partial z_i} ~dq ~+ \right.\\
		& & \left. \sum\limits_{j \neq i} ~\int\limits_{\partial W_i \cap \partial W_j} \nu_i^i \cdot n_i ~(f_i(q) - f_j(q)) ~dq \right]
		\label{control_law}
	\end{eqnarray}
	where $\alpha_{i,q}, \alpha_{i,z}$ are positive constants and $n_i$ the outward pointing normal vector of $W_i$, maximizes the performance criterion (\ref{criterion}) monotonically along the nodes' trajectories, leading in a locally optimal configuration.
	\end{theorem}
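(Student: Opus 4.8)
\emph{Proof approach.} The plan is to differentiate the coverage objective $\mathcal{H}$ of (\ref{criterion}) along the closed-loop trajectories and to show that, under (\ref{control_law}), $\dot{\mathcal{H}}$ is a sum of squared norms, hence nonnegative, with equality exactly at the configurations where every control input vanishes. First I would write $\mathcal{H} = \sum_{i \in I_n} \int_{W_i} f_i(q)\,dq$ and note that time enters both through the integrand $f_i$ (via $X_i$) and through the cell $W_i$ (via $X_i$ and the positions of the neighbours $N_i$). Applying the Leibniz rule for an integral over a moving region,
\begin{equation*}
\dot{\mathcal{H}} = \sum_{i \in I_n}\left[ \int_{W_i} \frac{\partial f_i}{\partial t}\,dq + \int_{\partial W_i} f_i\,\big(v_{\partial W_i}\cdot n_i\big)\,dq \right],
\end{equation*}
where $v_{\partial W_i}$ is the velocity of a boundary point and $n_i$ its outward normal. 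Since $f_i$ depends on time only through $X_i$, the first integral equals $\int_{W_i}\frac{\partial f_i}{\partial q_i}u_{i,q}\,dq + \int_{W_i}\frac{\partial f_i}{\partial z_i}u_{i,z}\,dq$, which already supplies the interior terms of (\ref{control_law}).

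Next I would split $\partial W_i$ into the arcs lying on $\partial\Omega$, the free arcs $\partial W_i\cap\partial\mathcal{O}$ (which lie on the sensing circle $\partial C_i^s$), and the shared arcs $\partial W_i\cap\partial W_j$, $j\in N_i$, which up to finitely many corner points exhausts $\partial W_i$. Arcs on $\partial\Omega$ are fixed, so $v_{\partial W_i}\cdot n_i=0$ and they drop out. On a free arc, $n_i=(q-q_i)/\|q-q_i\|$ and the constraint $\|q-q_i\|=z_i\tan a$ differentiated in time gives $v_{\partial W_i}\cdot n_i = n_i\cdot u_{i,q}+\tan(a)\,u_{i,z}$; integrating against $f_i$ yields precisely the $\int_{\partial W_i\cap\partial\mathcal{O}} n_i f_i\,dq$ and $\int_{\partial W_i\cap\partial\mathcal{O}}\tan(a)f_i\,dq$ terms. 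On a shared arc the boundary point obeys the implicit relation defining the $W_i$--$W_j$ interface; differentiating it and solving for the normal speed writes $v_{\partial W_i}\cdot n_i$ as a part linear in $(u_{i,q},u_{i,z})$ with geometric (Jacobian) coefficients $\upsilon_i^i,\nu_i^i$ plus a part linear in $(u_{j,q},u_{j,z})$. Because the same arc appears in $\partial W_j$ with normal $n_j=-n_i$, summing the two transport contributions over $I_n$ collapses each shared-arc integrand to one proportional to $(f_i-f_j)$, and regrouping by control input reproduces the remaining terms of (\ref{control_law}).

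Collecting everything, $\dot{\mathcal{H}} = \sum_{i\in I_n}\big(u_{i,q}\cdot b_{i,q} + u_{i,z}\,b_{i,z}\big)$, where $b_{i,q},b_{i,z}$ are exactly the bracketed vector and scalar of (\ref{control_law}); substituting $u_{i,q}=\alpha_{i,q}b_{i,q}$ and $u_{i,z}=\alpha_{i,z}b_{i,z}$ gives $\dot{\mathcal{H}}=\sum_{i\in I_n}\big(\alpha_{i,q}\|b_{i,q}\|^2+\alpha_{i,z}b_{i,z}^2\big)\ge 0$, with equality iff $b_{i,q}=0$ and $b_{i,z}=0$ for every $i$, i.e.\ iff all inputs vanish. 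Since $\mathcal{H}$ is continuous and bounded above on the compact $\Omega$, LaSalle's invariance principle then forces convergence to the stationary set of $\mathcal{H}$, i.e.\ a locally optimal configuration; the endpoint behaviour of $f_i$ (equal to $1$ at $z^{\min}$ and $0$ at $z^{\max}$) is what keeps $z_i$ inside $[z^{\min},z^{\max}]$, using the explicit $f_i$ of Section~\ref{section:fi_examples}.

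The main obstacle I anticipate is the boundary bookkeeping: the cells $W_i$ may be non-convex, disconnected, or even empty (cf.\ the preceding remarks), so one must justify the term-by-term differentiation under the integral sign (Lipschitz cell boundaries made of finitely many analytic arcs, generic configurations without higher-order contacts), correctly enumerate the arc types, compute the normal velocities of the shared interfaces, and verify that the cross-terms between distinct nodes reorganise into the compact $(f_i-f_j)$ form of (\ref{control_law}). Once that accounting is in place, the interior term and the final sign computation are routine.
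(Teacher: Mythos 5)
Your proposal is correct and follows essentially the same route as the paper: differentiate $\mathcal{H}$ via the Leibniz/transport rule, decompose $\partial W_i$ into the arcs on $\partial\Omega$ (which drop out), the free arcs on $\partial\mathcal{O}$, and the shared arcs where the $n_j=-n_i$ cancellation yields the $(f_i-f_j)$ terms, then observe that the gradient form of the inputs makes $\dot{\mathcal{H}}$ a nonnegative sum of squares. Your added LaSalle-type convergence remark and the explicit bookkeeping caveats go slightly beyond what the paper writes out, but the core argument is the same.
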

	
	\begin{proof}
	Initially we evaluate the time derivative of the optimization criterion $\mathcal{H}$
	\begin{equation*}
	\frac{d\mathcal{H}}{dt} = \sum_{i \in I_n} \left[ \frac{\partial\mathcal{H}}{\partial q_i} \dot{q}_i ~+ \frac{\partial\mathcal{H}}{\partial z_i} \dot{z}_i\right].
	\end{equation*}.
	
	The usage of a gradient based control law in the form
	\begin{equation*}
	u_{i,q} = \alpha_{i,q} \frac{\partial\mathcal{H}}{\partial q_i}, ~~ u_{i,z} = \alpha_{i,z} \frac{\partial\mathcal{H}}{\partial z_i}
	\end{equation*}
	will result in a monotonous increase of $\mathcal{H}$.
	
	By using the Leibniz integral rule \cite{Flanders_AMM73} we obtain

	\begin{eqnarray}
	\tiny
	\nonumber
	\frac{\partial\mathcal{H}}{\partial q_i} &=& \sum\limits_{i \in I_n}
	\left[
	~\int\limits_{\partial W_i} \upsilon_i^i ~n_i ~f_i(q) ~dq ~+ \int\limits_{W_i} \frac{\partial f_i(q)}{\partial q_i} ~dq
	\right]\\
	\nonumber
	&=& ~\int\limits_{\partial W_i} \upsilon_i^i ~n_i ~f_i(q) ~dq ~+ \int\limits_{W_i} \frac{\partial f_i(q)}{\partial q_i} ~dq +\\
	\nonumber
	& & \sum\limits_{j \neq i} \left[
	~\int\limits_{\partial W_j} \upsilon_j^i ~n_j ~f_j(q) ~dq ~+ \int\limits_{W_j} \frac{\partial f_j(q)}{\partial q_i} ~dq \right]
	\normalsize
	\end{eqnarray}
	where $\upsilon_j^i$ stands for the Jacobian matrix with respect to $q_i$ of the points $q\in \partial W_j$, 
	\begin{equation}
	\upsilon_j^i\left(q\right) \stackrel{\triangle}{=} \frac{\partial q}{\partial q_i},~~q\in \partial W_j,~i,j\in I_n.
	\end{equation}
	Since $\frac{\partial f_j(q)}{\partial q_i} = 0$ we obtain
	\begin{eqnarray}
	\tiny
	\nonumber
	\frac{\partial\mathcal{H}}{\partial q_i}
	&=& ~\int\limits_{\partial W_i} \upsilon_i^i ~n_i ~f_i(q) ~dq ~+ \int\limits_{W_i} \frac{\partial f_i(q)}{\partial q_i} ~dq +\\
	\nonumber
	& & \sum\limits_{j \neq i}
	~\int\limits_{\partial W_j} \upsilon_j^i ~n_j ~f_j(q) ~dq
	\normalsize
	\end{eqnarray}
	whose three terms indicate how a movement of node $i$ affects the boundary of its cell, its whole cell and the boundaries of the cells of other nodes. It is clear that only the cells $W_j$ which have a common boundary with $W_i$ will be affected and only at that common boundary.
	
	The boundary $\partial W_i$ can be decomposed in disjoint sets as
	\begin{equation}
	\hspace{-0.05cm}\partial W_i=
	\left\{\partial W_i \cap \partial \Omega \right\}
	\cup 
	\left\{\partial W_i \cap \partial \mathcal{O} \right\}
	\cup
	\{\bigcup_{j \neq i} \left( \partial W_i \cap \partial W_j \right)\}.
	\label{boundary_decomposition}
	\end{equation}
	These sets represent the parts of $\partial W_i$ that lie on the boundary of $\Omega$, the boundary of the node's sensing region and the parts that are common between the boundary of the cell of node $i$ and those of other nodes. This decomposition can be seen in Figure~\ref{fig:boundary_decomposition} with the sets $\partial W_i \cap \partial \Omega$, $\partial W_i \cap \partial \mathcal{O}$ and $\partial W_i \cap \bigcup_{j \neq i} \partial W_j$ appearing in solid red, green and blue respectively.
	
	\begin{figure}[htb]
		\centering
		\includegraphics[width=0.4\textwidth]{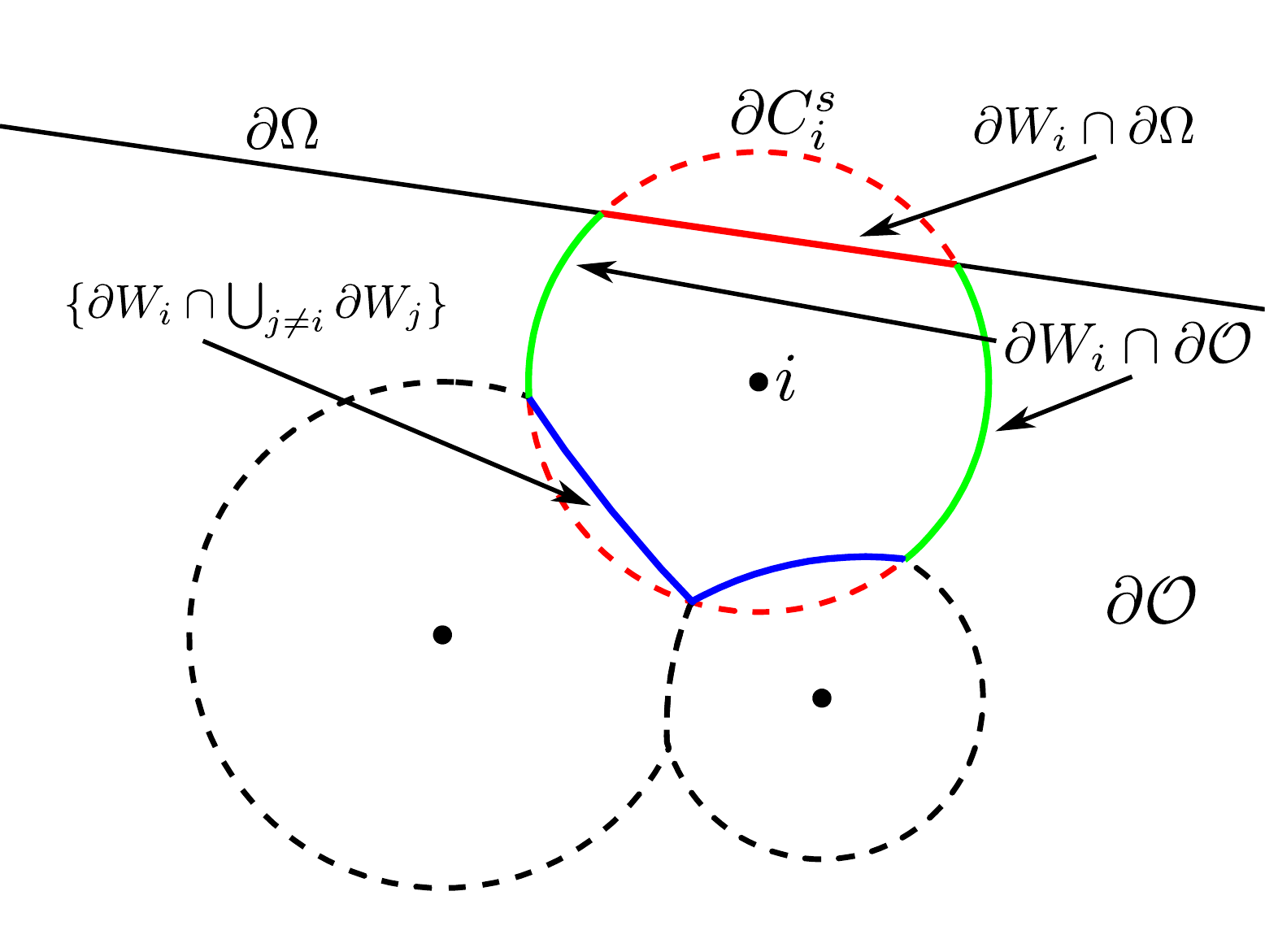}
		\caption{$\partial W_i$-decomposition into disjoint sets}
		\label{fig:boundary_decomposition}
	\end{figure}
	
	At $q \in \partial \Omega$ it holds that $\upsilon_i^i = \textbf{0}_{2 \times 2}$ since we assume the region of interest is static. Additionally, since only the common boundary $\partial W_j \cap \partial W_i$ of node $i$ with any other node $j$ is affected by the movement of node $i$, $\frac{\partial\mathcal{H}}{\partial q_i}$ can be simplified as
	\begin{eqnarray}
	\tiny
	\nonumber
	\frac{\partial\mathcal{H}}{\partial q_i}
	&=& ~\int\limits_{\partial W_i \cap \partial \mathcal{O}} \upsilon_i^i ~n_i ~f_i(q)  ~dq ~+ \int\limits_{W_i} \frac{\partial f_i(q)}{\partial q_i} ~dq ~+\\
	\nonumber
	& & \sum\limits_{j \neq i}
	~\int\limits_{\partial W_i \cap \partial W_j} \upsilon_i^i ~n_i ~f_i(q) ~dq ~+\\
	\nonumber
	& & \sum\limits_{j \neq i}
	~\int\limits_{\partial W_j \cap \partial W_i} \upsilon_j^i ~n_j ~f_j(q) ~dq.
	\normalsize
	\end{eqnarray}
	
	The evaluation of $\upsilon_i^i$ on $\partial W_i \cap \partial \mathcal{O}$ can be found in the Appendix, whereas its evaluation on $\partial W_j \cap \partial W_i$ depends on the choice of $f_i(q)$ and is not examined in this section. Because the boundary $\partial W_i \cap \partial W_j$ is common among nodes $i$ and $j$, it holds true that $\upsilon_j^i = \upsilon_i^i$ when evaluated over it and that  $n_j = -n_i$. Finally the sums and the integrals within then can be combined, producing the final form of the planar control law
	\begin{eqnarray}
	\tiny
	\nonumber
	\frac{\partial\mathcal{H}}{\partial q_i}
	&=& ~\int\limits_{\partial W_i \cap \partial \mathcal{O}} ~n_i ~f_i(q)  ~dq ~+ \int\limits_{W_i} \frac{\partial f_i(q)}{\partial q_i} ~dq ~+\\
	\nonumber
	& & \sum\limits_{j \neq i}
	~\int\limits_{\partial W_j \cap \partial W_i} \upsilon_i^i ~n_i ~\left(f_i(q) - f_j(q)\right) ~dq.
	\normalsize
	\end{eqnarray}
	
	Similarly, by using the same $\partial W_i$ decomposition and defining $\nu_j^i\left(q\right) \stackrel{\triangle}{=} \frac{\partial q}{\partial z_i},~~q\in \partial W_j,~i,j\in I_n$, the altitude control law is
	\begin{eqnarray}
	\tiny
	\nonumber
	\frac{\partial\mathcal{H}}{\partial z_i}
	&=& ~\int\limits_{\partial W_i \cap \partial \mathcal{O}} ~\tan(a) ~f_i(q)  ~dq ~+ \int\limits_{W_i} \frac{\partial f_i(q)}{\partial z_i} ~dq ~+\\
	\nonumber
	& & \sum\limits_{j \neq i}
	~\int\limits_{\partial W_j \cap \partial W_i} \nu_i^i \cdot n_i ~\left(f_i(q) - f_j(q)\right) ~dq
	\normalsize
	\end{eqnarray}
	where the evaluation of $\nu_i^i(q) \cdot n_i$ on $\partial W_i \cap \partial \mathcal{O}$ and $\partial W_j \cap \partial W_i$ can be found in the Appendix.
	
	\end{proof}
	
	\begin{remark}
	The cell $W_i$ of node $i$ is affected only by its neighbors $N_i$ thus resulting in a distributed control law. The finding of the neighbors $N_i$ depends on their coordinates $X_j,~j \in N_{i}$ and does not correspond to the classical 2D-Delaunay neighbors. The computation of the $N_i$ set demands node $i$ to be able to communicate with all nodes within a sphere centered around $X_i$ and radius $r_{i}^{c}$
	\small
	\begin{eqnarray}
	\nonumber
	r_{i}^{c}=\max\left\{2 z_i~\tan a, 
	~\left( z_i+z^{\min}\right)^2 \tan^2 a + \left( z_ - z^{\min} \right)^2,\right. \\
	\nonumber
	\left. 
	~\left( z_i+z^{\max}\right)^2 \tan^2 a + \left( z_ - z^{\max} \right)^2 \right\}.
	\end{eqnarray}
	\normalsize
	Figure~\ref{fig:neighbor_set} highlights the case where nodes $2$, $3$ and $4$ are at $z^{\min}$, $z_1$ and $z^{\min}$ respectively. These are the worst case scenario neighbors of node $1$ , the farthest of which dictates the communication range $r_1^c$. 

	\begin{figure}[htb]
		\centering
		\includegraphics[width=0.2\textwidth]{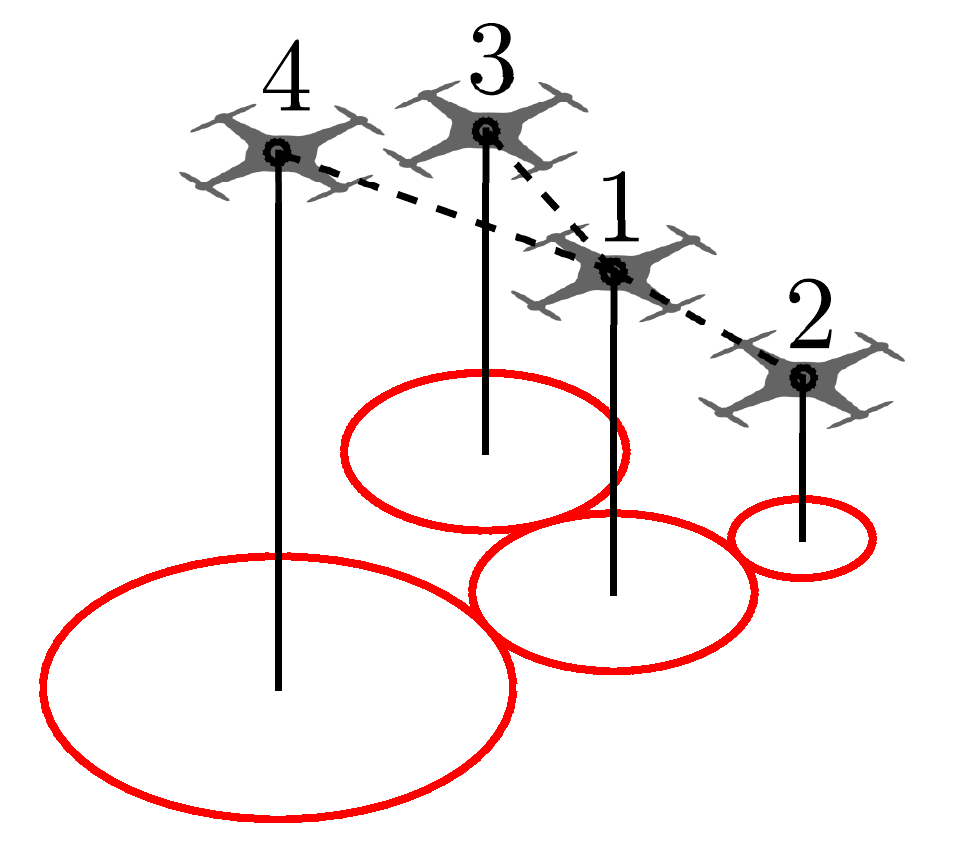}
		\caption{$N_i$ neighbor set}
		\label{fig:neighbor_set}
	\end{figure}
	\end{remark}
	
	\begin{remark}
	\label{remark:zmax}
	When $z_i = z^{\max}$, both the planar and altitude control laws are zero because $f_i(q_i) = 0$. This results in the UAV being unable to move any further in the future and additionally its contribution to the coverage-quality objective being zero. However this degenerate case is of little concern, as shown in Sections \ref{section:stable_alt} and \ref{section:degenerate_cases}.
	\end{remark}
	
	\subsection{Stable altitude}
	\label{section:stable_alt}
	The altitude control law $u_{i,z}$ moves each UAV towards an altitude in which $u_{i,z} = 0$ which corresponds to an equilibrium point for that particular node. We call this the stable altitude $z_i^{stb}$ and it is the solution with respect to $z_i$ of the equation:
	\begin{eqnarray}
	\nonumber
	&u_{i,z} = 0 \Rightarrow\\
	\nonumber
	&~\int\limits_{\partial W_i \cap \partial \mathcal{O}} \tan(a) ~f_i(q) ~dq ~+ \int\limits_{W_i} \frac{\partial f_i(q)}{\partial z_i} ~dq ~+\\
	\nonumber
	&\sum\limits_{j \neq i} ~\int\limits_{\partial W_i \cap \partial W_j} \nu_i^i \cdot n_i ~(f_i(q) - f_j(q)) ~dq = 0.
	\end{eqnarray}
	Both integrals over $\partial W_i$ are non-negative since their integrands are non-negative, whereas the integral over $W_i$ is negative since coverage quality decreases as altitude increases.
	
	The stable altitude is not common among nodes as it depends on one's neighbors $N_i$ and is not constant over time since the neighbors change over time.
	
	When the integrals over $\partial W_i$ are both zero, the resulting control law has a negative value. This will lead to a reduction of the node's altitude and in time the node will reach $z^{stb} = z^{\min}$. Once the node reaches $z^{\min}$ its altitude control law will be $0$ until the integral over $\partial W_i$ stops being zero. The planar control law however is unaffected, so the node's performance in the future is not affected. This situation may arise in a node with several neighboring nodes at lower altitude that result in $\partial C_i^s \cap \partial W_i = \emptyset$.
	
	When the integral over $W_i$ is zero, the resulting control law has a positive value. This will lead to an increase of the node's altitude and in time the node will reach $z^{stb} = z^{\max}$ and as shown in Remark \ref{remark:zmax} the node will be immobilized from this time onwards. However this situation will not arise in practice as explained in Section \ref{section:degenerate_cases}.
	
	When the integral over $W_i$ and at least one of the integrals over $\partial W_i$ are non-zero, then ${z^{stb} \in \left(z^{\min}, z^{\max}\right)}$.
	
	\subsection{Optimal altitude}
	It is useful to define an optimal altitude $z^{opt}$ as the altitude a node would reach if: 1) it had no neighbors $(N_i = \varnothing)$, and 2) no part of $W_i$ on $\partial \Omega,~(\partial \Omega \cap \partial C_i^s = \varnothing)$.
	This optimal altitude is the solution with respect to $z_i$ of the equation
	\begin{eqnarray}
	\nonumber
	\int\limits_{\partial C_i^s} \tan(a) ~f_i(q) ~dq ~+ \int\limits_{W_i} \frac{\partial f_i(q)}{\partial z_i} ~dq = 0.
	\end{eqnarray}
	Additionally, let us denote the sensing region of a node $i$ at $z^{opt}$ as $C_{i,opt}^s$ and $\mathcal{H}_{opt}$ the value of the criterion when all nodes are located at $z^{opt}$.
	
	If $\Omega = \mathbb{R}^2$ and because the planar control law $u_{i,q}$ results in the repulsion of the nodes, the network will reach a state in which no node will have neighbors and all nodes will be at $z^{opt}$. In that state, the coverage-quality criterion (\ref{criterion}) will have attained it's maximum possible value $\mathcal{H}_{opt}$ for that particular network configuration and coverage quality function $f_i$. This network configuration will be globally optimal. 
	
	When $\Omega$ is a convex compact subset of $\mathbb{R}^2$, it is possible for the network to reach a state where all the nodes are at $z^{opt}$ only if $n$ $C_{i,opt}^s$ disks can be packed inside $\Omega$. This state will be globally optimal. If that is not the case, the nodes will converge at some altitude other than $z^{opt}$ and in general different among nodes. It should be noted that although the nodes do not reach $z^{opt}$, the network configuration is locally optimal.

	\subsection{Degenerate cases}
	\label{section:degenerate_cases}
	It is possible due to the nodes' initial positions that the sensing disk of some node $i$ is completely contained within the sensing disk of another node $j$, i.e. $C_i^s \cap C_j^s = C_i^s$. In such a case, it is not guaranteed that the control law will result in separation of the nodes' sensing regions and thus it is possible that the nodes do not reach $z^{opt}$. Instead, node $j$ may converge to a higher altitude and node $i$ to a lower altitude than $z^{opt}$, while their projections on the ground $q_i$ and $q_j$ remain stationary. Because the region covered by node $i$ is also covered by node $j$, the network's performance is impacted negatively. Since this degenerate case may only arise at the network's initial state, care must be taken to avoid it during the robots' deployment. Such a degenerate case is shown in Figure \ref{fig:partitioning_comparison} where the sensing disk of node $4$ is completely contained within that of node $3$.
	
	Another case of interest is when some node $i$ is not assigned a region of responsibility, i.e. ${W_i  = \emptyset}$. This is due to the existence of other nodes at lower altitude that cover all of $C_i^s$ with better quality than node $i$. This is the case with node $8$ in Figure \ref{fig:disjoint_empty_cells}. This situation is resolved since the nodes at lower altitude will move away from node $i$ and once node $i$ has been assigned a region of responsibility it will also move. It should be noted that the coverage objective $\mathcal{H}$ remains continuous even when node $i$ changes from being assigned no region of responsibility to being assigned some region of responsibility.
	
	In order for a node to reach $z^{\max}$, as explained in Section \ref{section:stable_alt}, the integral over $W_i$ of its altitude control law must be zero, that is its cell must consist of just a closed curve without its interior. In order to have $W_i = \partial W_i$, a second node $j$ must be directly below node $i$ at an infinitesimal distance. However just as node $i$ starts moving upwards the integral over $W_i$ will stop being zero thus changing the stable altitude to some value $z^{stb} < z^{\max}$. In other words, in order for a node to reach $z^{\max}$, the configuration described must happen at an altitude infinitesimally smaller than $z^{\max}$. So in practice, if all nodes are deployed initially at an altitude smaller than $z^{\max}$, no node will reach $z^{\max}$ in the future.

	\begin{figure}[htb]
		\centering
		\includegraphics[width=0.23\textwidth]{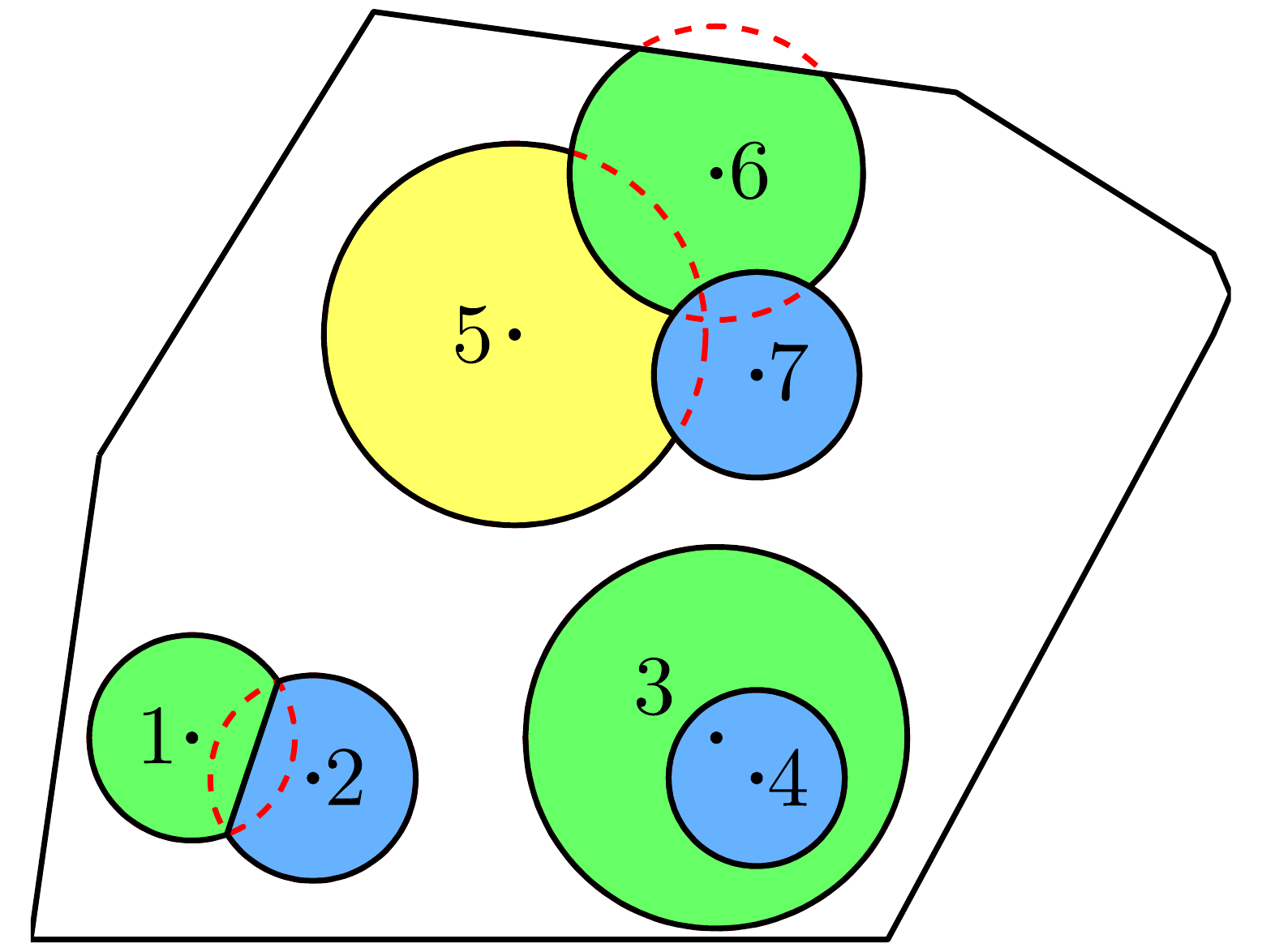}\hspace{0.01cm}
		\includegraphics[width=0.23\textwidth]{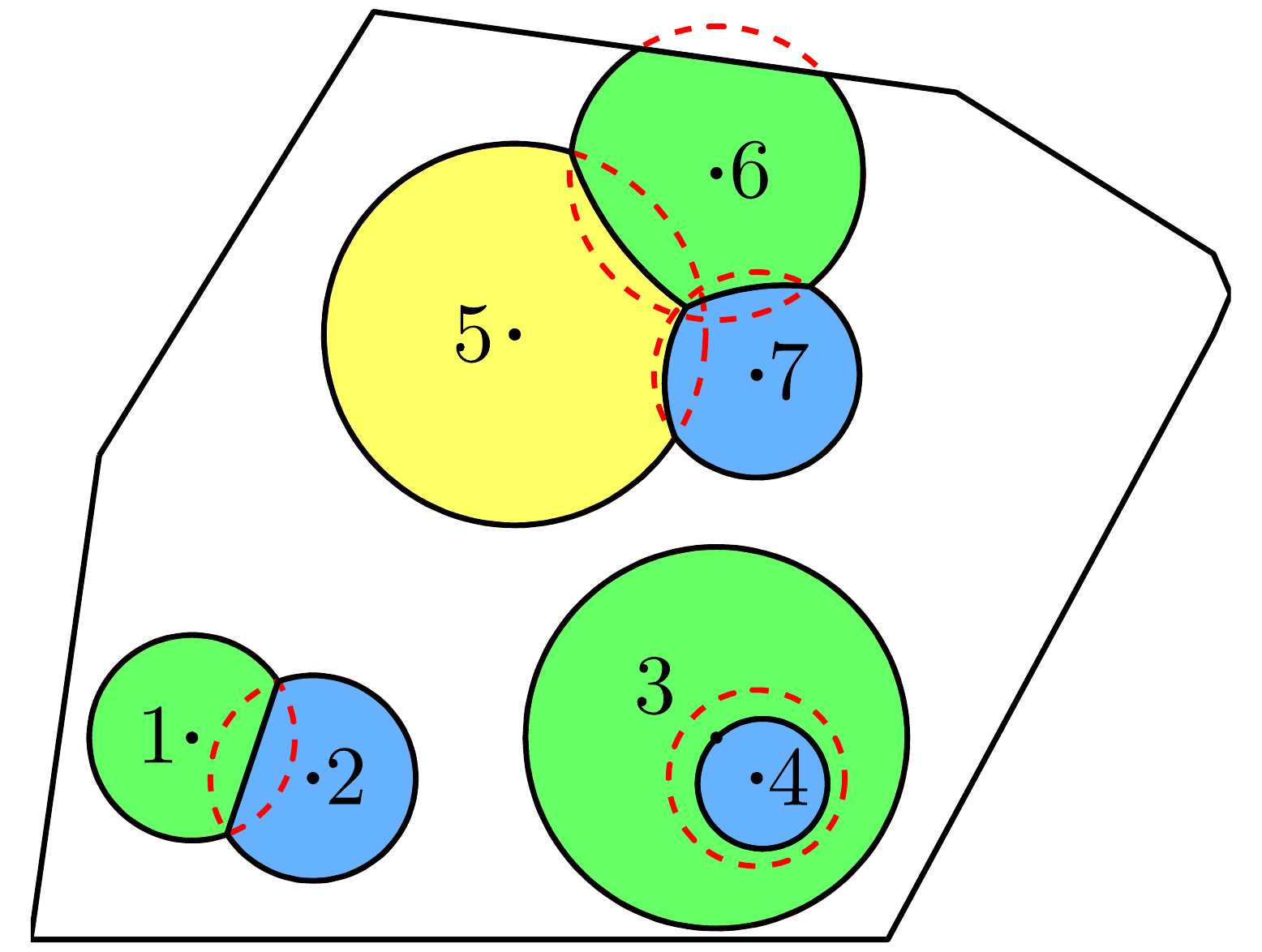}
		\caption{Sensed space partitioning using the coverage quality functions $f_i^u$ [Left] and $f_i^p$ [Right].}
		\label{fig:partitioning_comparison}
	\end{figure}

%%%%%%%%%%%%%%%%%%%%%%%%%%%%%%%%%%%%%%%%%
\section{Coverage quality functions}
\label{section:fi_examples}
%%%%%%%%%%%%%%%%%%%%%%%%%%%%%%%%%%%%%%%%%
	The choice of coverage quality function affects both the sensed space partitioning (\ref{partitioning}) and the control law (\ref{control_law}). The function $f_i,~i \in I_n$ is required to have the following properties
	\begin{enumerate}
	\item $f_i(q) = 0, ~\forall q \notin C_i^s, $
	\item $f_i(q) \geq 0, ~\forall q \in \partial C_i^s,$
	\item $f_i(q)$ is first order differentiable with respect to $q_i$ and $z_i$, or $\frac{\partial f_i(q)}{\partial q_i}$ and $\frac{\partial f_i(q)}{\partial z_i}$ exist within $C_i^s$,
	\item $f_i(q)$ is symmetric around the $z$-axis,
	\item $f_i(q)$ is a decreasing function of $z_i$,
	\item $f_i(q)$ is a non-increasing function of $\parallel q-q_i\parallel$,
	\item $f_i(q_i) = 1$ when $z_i = z^{\min}$ and $f_i(q_i) = 0$ when $z_i = z^{\max}$.
	\end{enumerate}
	The need for the first property is clear since the coverage quality with respect to node $i$ should be zero outside $C_i^s$. The second guarantees that the integrals of the control law over some part of $\partial C_i^s$ are not zero and the third is required so that the integrals over $W_i$ can be defined. The fourth property is due to the fact that the sensing patterns are rotation invariant. Two indicative coverage quality functions are examined in the sequel along with the properties of the resulting partitioning schemes and control laws. 
	
	\subsection{Uniform coverage quality}
	The simplest coverage quality function that can be chosen is the uniform one, i.e. the coverage quality is the same for all points in $C_i^s$. This function can effectively model downward facing cameras~\cite{DiFranco_JIRS16,Avellar_S2015} that provide uniform quality in the whole image. This function will be referred to as $f_i^u$ from now on and is defined as
	\begin{equation*}
	f_i^u(q) = \left \{
	\begin{aligned}
		&~\frac{\left( \left( z_i - z^{\min} \right)^2 - \left( z^{\max} - z^{\min} \right)^2 \right)^2}{\left( z^{\max} - z^{\min} \right)^4}, & ~q \in C_i^s\\
		&~0, & ~q \notin C_i^s
	\end{aligned}
	\right.\\
	\end{equation*}
	It should be noted that the above definition of a uniform coverage quality function is not unique, just the simplest of the possible ones.
	
	A projection of this function on the $y-z$ plane can be seen in the left portion of Figure \ref{fig:quality_functions} for a node at $X_i = \left[ 0,0, z^{\min} \right]^T$. It can be shown that $f_i^u(q) = 0$ when $\parallel q-q_i \parallel > z_i ~\tan{a}$, i.e. when $q$ is outside the sensing disk.
	
	\begin{figure}[htb]
		\centering
		\includegraphics[width=0.23\textwidth]{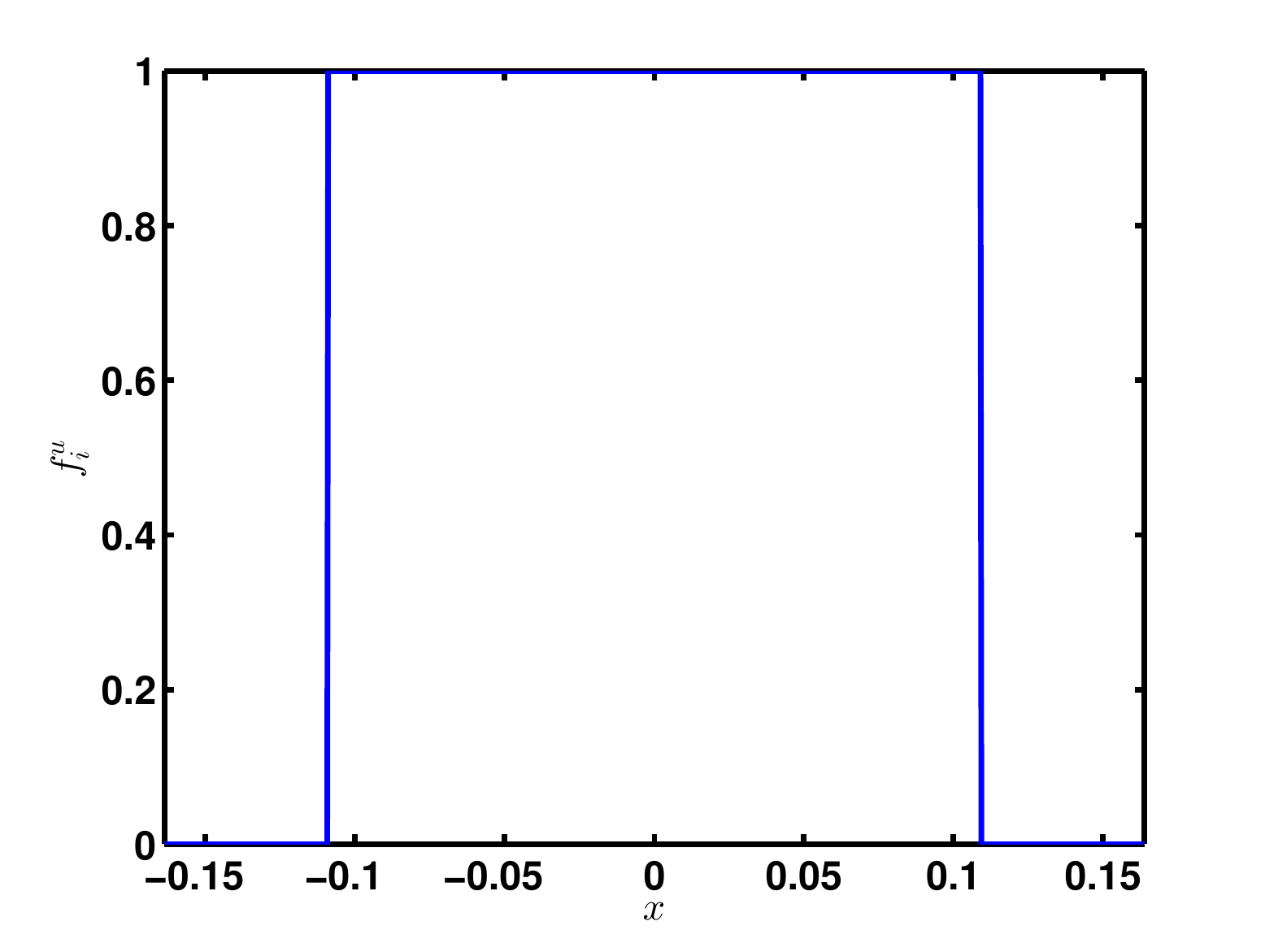}\hspace{0.01cm}
		\includegraphics[width=0.23\textwidth]{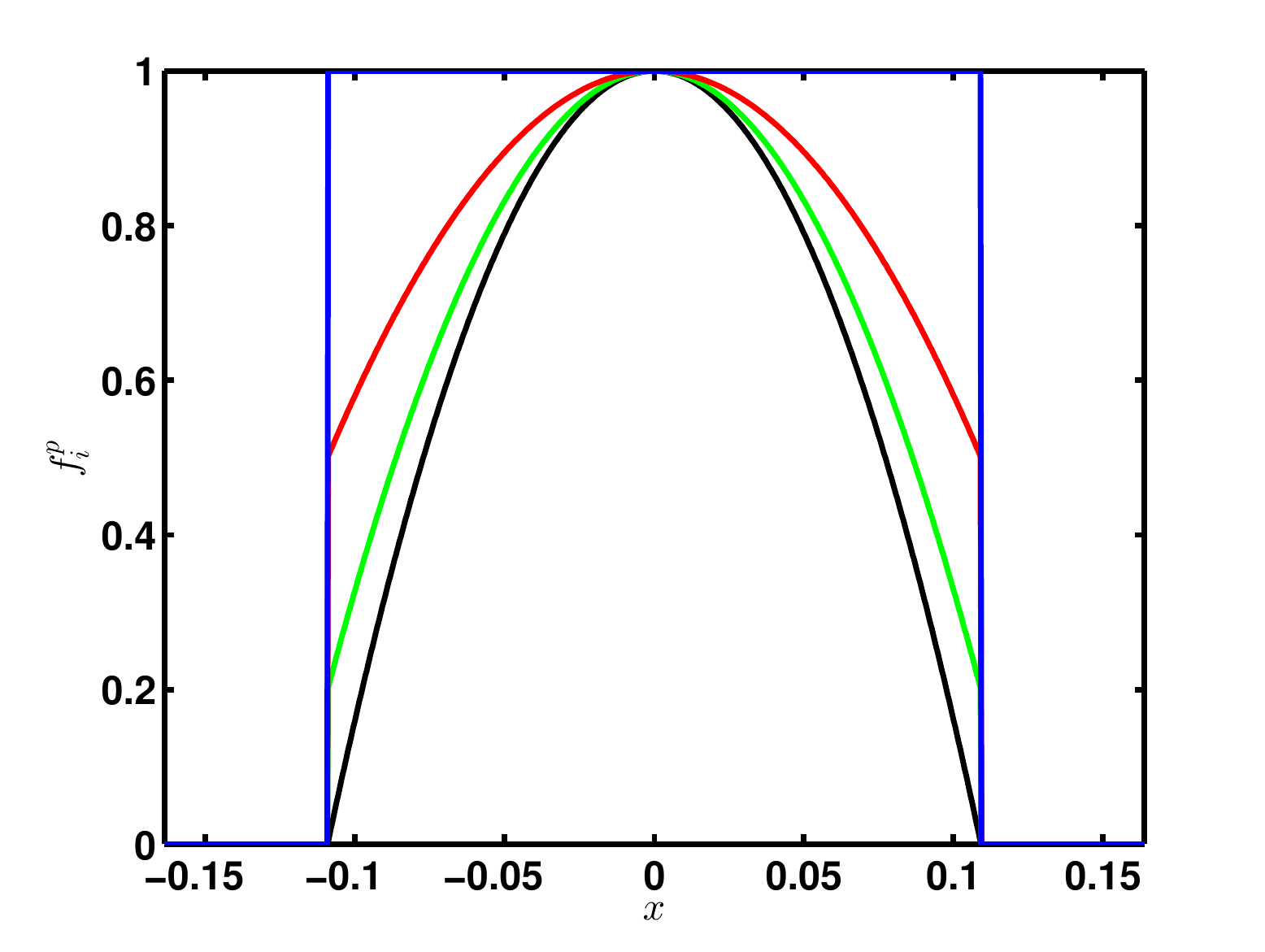}
		\caption{Coverage quality functions: $~f_i^u(q)$ [Left], $~f_i^p(q)$ [Right] for different values of $b$.
		%Blue for $b=1$, red for $b=0.5$, green for $b=0.8$ and black for $b=0$.
		}
		\label{fig:quality_functions}
	\end{figure}

	This choice of quality function simplifies the space partitioning since $\partial W_j \cap \partial W_i$ is either an arc of $\partial C_i$ if $z_i < z_j$ or of $\partial C_j$ if $z_i > z_j$. In the case where $z_i = z_j$, $\partial W_j \cap \partial W_i$ is chosen arbitrarily as the line segment defined by the two intersection points of $\partial C_i$ and $\partial C_j$. The resulting cells consist of circular arcs and line segments.
	
	If the sensing disk of a node $i$ is contained within the sensing disk of another node $j$, i.e. $C_i^s \cap C_j^s = C_i^s$, then $W_i = C_i^s$ and $W_j = C_j^s \setminus C_i^s$. An example partitioning with all of the above cases shown can be seen in Figure \ref{fig:partitioning_comparison} [Left], where the boundaries of the sensing disks $\partial C_i^s$ are in red and the boundaries of the cells $\partial W_i$ in black. Nodes $1$ and $2$ are at the same altitude so the arbitrary partitioning scheme is used. The sensing disk of node $3$ contains the sensing disk of node $4$ and nodes $5, 6$ and $7$ illustrate the general case.
	
	 The control law is also significantly simplified. Since $\frac{\partial f_i(q)}{\partial q_i} = 0$, the corresponding integral of the control law (\ref{control_law}) is 0. Additionally, $\upsilon_i^i$ and $\nu_i^i ~n_i$ are evaluated as
	\begin{eqnarray}
	\upsilon_i^i &=& \left \{
	\begin{aligned}
		\mathbb{I}_2, ~ z_i \leq z_j\\
		\textbf{0}_2, ~ z_i > z_j
	\end{aligned}
	\right.\\
	\nu_i^i \cdot n_i &=& \left \{
	\begin{aligned}
		\tan(a), ~ z_i \leq z_j\\
		0, ~ z_i > z_j
	\end{aligned}
	\right.
	\label{jacobian_evaluation}
	\end{eqnarray}
	Because both $\upsilon_i^i$ and $\nu_i^i \cdot n_i$ are zero when ${z_i > z_j \Rightarrow f_i^u < f_j^u}$, the integrals over ${\partial W_i \cap \partial W_j}$ are zero for both the planar and altitude control laws when ${f_i^u < f_j^u}$ and as a result need only be evaluated when ${f_i^u > f_j^u}$.
	
	The control law essentially maximizes the volume contained by all the cylinders defined by $f_i^u, ~i \in I_n$, under the constraints imposed by the network and area of interest.
	
	While the control law in the general case guarantees that $z_i < z^{\max}, \forall i \in I_n$, this particular one also guarantees that $z_i > z^{\min}, \forall i \in I_n$. If $z_i = z^{\min}$, then it follows that $\frac{\partial f_i(q)}{\partial z_i} = 0$, resulting in the integral over $W_i$ being zero. Since $f_i^u$ is positive and only arcs of $\partial W_i$ where $f_i^u > f_j^u$ are integrated over, the remaining part of the control law is positive, resulting in the node moving to a higher altitude.

	\subsection{Decreasing coverage quality}
	A more realistic assumption for some types of visual sensors would be for the coverage quality to decrease as distance from the node increases, to take into account lens distortion for example. The coverage quality of a point $q \in C_i^s$ is maximum directly below the node and decreases as ${\parallel q - q_i \parallel}$ increases. One such function is an inverted paraboloid whose maximum value depends on $z_i$ and could be defined as
	\small
	\begin{equation*}
	f_i^p(q) = \left \{
	\begin{aligned}
		&~\left[ 1 - \frac{1-b}{\left[ z_i \tan(a) \right]^2} \left[  \left( x - x_i\right)^2 + \left( y - y_i\right)^2 \right] \right] f_i^u, & ~q \in C_i^s\\
		&~0, & ~q \notin C_i^s
	\end{aligned}
	\right.\\
	\end{equation*}
	\normalsize
	where $b \in (0,1)$ is the coverage quality on $\partial C_i$ as a percentage of the coverage quality on $q_i$ and $f_i^u$ is the uniform coverage quality function defined previously and is used to set the maximum value of the paraboloid. A projection of this function on the $y-z$ plane can be seen in the right portion of Figure \ref{fig:quality_functions} for a node at $X_i = \left[ 0,0, z^{\max} \right]^T$ and various $b$-values; $b \in \left\{1,~0.5,~0.2,~0\right\}$ corresponds to blue, red, green and black colored curves, respectively. It can be shown that $f_i^p(q) = 0$ when $\parallel q-q_i \parallel > z_i ~\tan{a}$, i.e. when $q$ is outside the sensing disk. Additionally, when $\parallel q-q_i \parallel = z_i ~\tan{a}$, $f_i^p = b f_i^u$. 
	
	For the space partitioning it is necessary to solve $f_i^p(q) \geq f_j^p(q)$ in order to find part of $\partial W_j \cap W_i$. The resulting solution on the $x - y$ plane is a disk $C_{i,j}^i$ when $z_i \neq z_j$ and the halfplane defined by the perpendicular bisector of $q_i$ and $q_j$ when $z_i = z_j$. Finally, the cell of node $i$ with respect to node $j$ is $W_i = C_i^s \cap C_{i,j}^i$ if $z_i < z_j$ and $W_i = C_i^s \setminus \left( C_j^s \cap C_{i,j}^i \right)$ if $z_i > z_j$.  Figure \ref{fig:cell_creation} shows the intersection of two paraboloids $f_i^p$ and $f_j^p$ with $z_i < z_j$ and the boundaries of the resulting cells above them. The red line shows $\partial C_{i,j}^i \cap \left( C_i^s \cap C_j^s \right)$, i.e. the boundary of the intersection disk constrained within the sensing disks of the two nodes, since only for $q \in C_i^s \cap C_j^s$ can $f_i^p(q) = f_j^p(q)$. The green arcs represent $\partial W_i \cap \partial C_i^s \cap W_j$ while the blue ones represent $\partial W_i \cap \partial \mathcal{O}$. Thus the boundary of the cell of any node $i$ can be decomposed into disjoint sets as 
	\small
	\begin{equation*}
	\partial W_i = \left\{ (\partial W_i \cap \partial \mathcal{O}) \bigcup_{j \in N_i} \left( (\partial W_i \cap \partial C_{i,j}^i) \cap (\partial W_i \setminus \partial C_{i,j}^i \cap W_j) \right) \right\}
	\end{equation*}.
	\normalsize
		
	\begin{figure}[htb]
		\centering
		\includegraphics[width=0.2\textwidth]{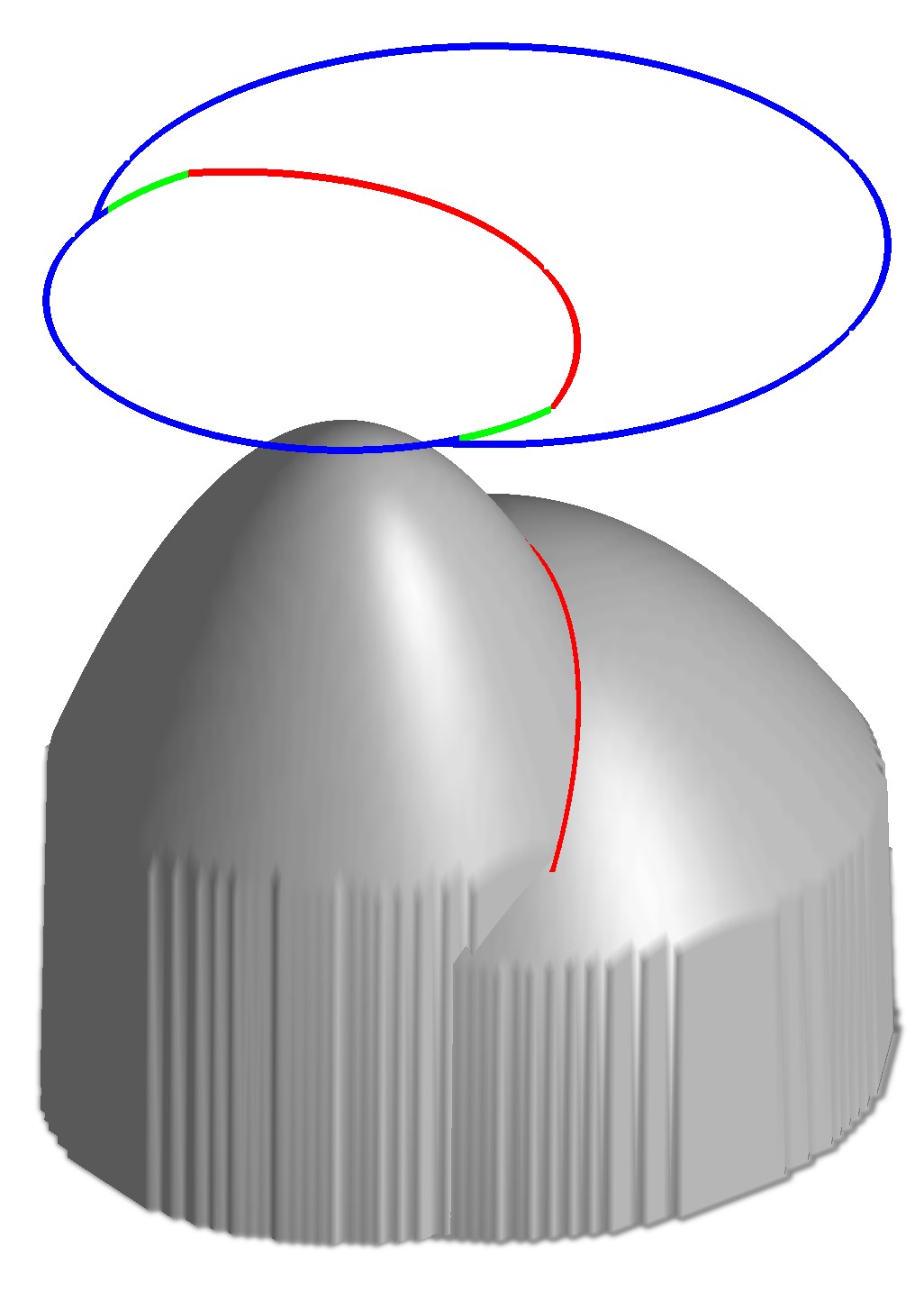}
		\caption{$\partial W_i$-decomposition when $f_i^p$ is used.}
		\label{fig:cell_creation}
	\end{figure}

	If the sensing disk of a node $i$ is contained within the sensing disk of another node $j$, i.e. $C_i^s \cap C_j^s = C_i^s$, then $W_i \subset C_i^s$ and $W_j = C_j^s \setminus W_i$. It is important to note that in the general case, $\partial W_j \cap \partial W_i$ consists of circular arcs from both $\partial C_{i,j}^i$ and $\partial C_i^s$ or $\partial C_j^s$ (depending on the relationship of $z_i$ and $z_j$). 
	
	An example partitioning with all of the above cases shown can be seen in Figure \ref{fig:partitioning_comparison} [Right], where the boundaries of the sensing disks $\partial C_i^s$ are in red and the cells $W_i$ in black. The sensing disk of node $3$ contains the sensing disk of node $4$ and nodes $5, 6$ and $7$ illustrate the general case.
	
	For the evaluation of $\upsilon_i^i$ and $\nu_i^i \cdot n_i$ the partitioning of $\partial W_i$ explained previously will be used.
	Since $\forall q \in \partial C_{i,j}^i \Rightarrow f_i^p(q) = f_j^p(q)$, the integral over $\partial C_{i,j}^i$ is zero. For the remaining part of $\partial W_i \cap W_j$ which is $\partial W_i \setminus \partial C_{i,j}^i \cap W_j$, the evaluation of $\upsilon_i^i$ and $\nu_i^i \cdot n_i$ is the same as in the case of the uniform coverage function $f_i^u$. For the same reasons as in the uniform coverage quality, the integrals over ${\partial W_i \cap \partial W_j}$ are zero for both the planar and altitude control laws when ${f_i^p < f_j^p}$ and as a result need only be evaluated when ${f_i^p > f_j^p}$.
	
	The control law essentially leads to the maximization of the volume contained by all the paraboloid-like functions $f_i^p, ~i \in I_n$, under the constraints imposed by the network and area of interest.
	
	This control law also guarantees that that $z_i > z^{\min}, \forall i \in I_n$ in addition to $z_i < z^{\max}, \forall i \in I_n$. Because $f_i^p$ is positive and only arcs of $\partial W_i$ where $f_i^p > f_j^p$ are integrated over, the integral over $\partial W_i$ is always positive when $z_i = z^{\min}$. It can also be shown that $\frac{\partial f_p(q)}{\partial z_i} > 0, \forall q \in W_i$ when $z_i = z^{\min}$ and thus the integral over $W_i$ of $u_{i,z}$ will also be positive. As a result the altitude control law is always positive when $z_i = z^{\min}$ resulting in the altitude increase of node $i$.
	
%%%%%%%%%%%%%%%%%%%%%%%%%%%%%%%%%%%%%%%%%
\section{Simulation Studies}
\label{section:simulation_studies}
	
	Simulation results of the proposed control law using the uniform coverage quality function $f_i^u$ are presented in this section. The region of interest $\Omega$ is the same as the one used in \cite{Stergiopoulos_IETCTA10} for consistency. All nodes are identical with a half sensing cone angle $a = 20^{\circ}$ and $z_i \in [0.3, ~2.3], ~\forall i \in I_n$. The boundaries of the nodes' cells are shown in black and the boundaries of their sensing disks in red.
	
	\begin{remark}
	When using $f_i^u(q)$ it is possible to observe jittering on the cells of some nodes $i$ and $j$. This can happen when $z_i = z_j$ and the arbitrary boundary $\partial W_i \cap \partial W_j$ is used. Once the altitude of one of the nodes changes slightly, the boundary between the cells will change instantaneously from a line segment to a circular arc. The coverage-quality objective $\mathcal{H}$ however will present no discontinuity when this happens.
	\end{remark}
	
	\subsection{Case Study I}
	In this simulation three nodes start grouped as seen in Figure (\ref{fig:uniform_3_nodes_2D}) [Left]. Since the region of interest is large enough for three optimal disks $C_{i,opt}^s$ to fit inside, all the nodes converge at the optimal altitude $z^{opt}$. As it can be seen in Figure~\ref{fig:uniform_3_nodes_area}, the area covered by the network is equal to $\mathcal{A}\left( \bigcup_{i \in I_n} C_i^s \right)$ and the coverage-quality criterion has reached $\mathcal{H}_{opt}$ = $\mathcal{A}\left( \bigcup_{i \in I_n} C_{i,opt}^s \right)$. However since all nodes converged at $z^{opt}$, the addition of more nodes will result in significantly better performance coverage and quality wise, as is clear from Figure \ref{fig:uniform_3_nodes_2D} [Right] and Figure \ref{fig:uniform_3_nodes_area} [Left]. Figure \ref{fig:uniform_3_nodes_3D} shows a graphical representation of the coverage quality at the initial and final stages of the simulation. It is essentially a plot of all $f_i^u(q)$ inside the region of interest. The volume of the cylinders in Figure \ref{fig:uniform_3_nodes_3D} [Right] is the maximum possible. The trajectories of the UAVs in $\mathbb{R}^3$ can be seen in Figure \ref{fig:uniform_3_nodes_traj} in blue and their projections on the region of interest in black. 

	% % % % % % % % % % % % FIGURES 3 NODES % % % % % % % % % % % %
	\begin{figure}[tp]
		\centering
		\includegraphics[width=0.23\textwidth]{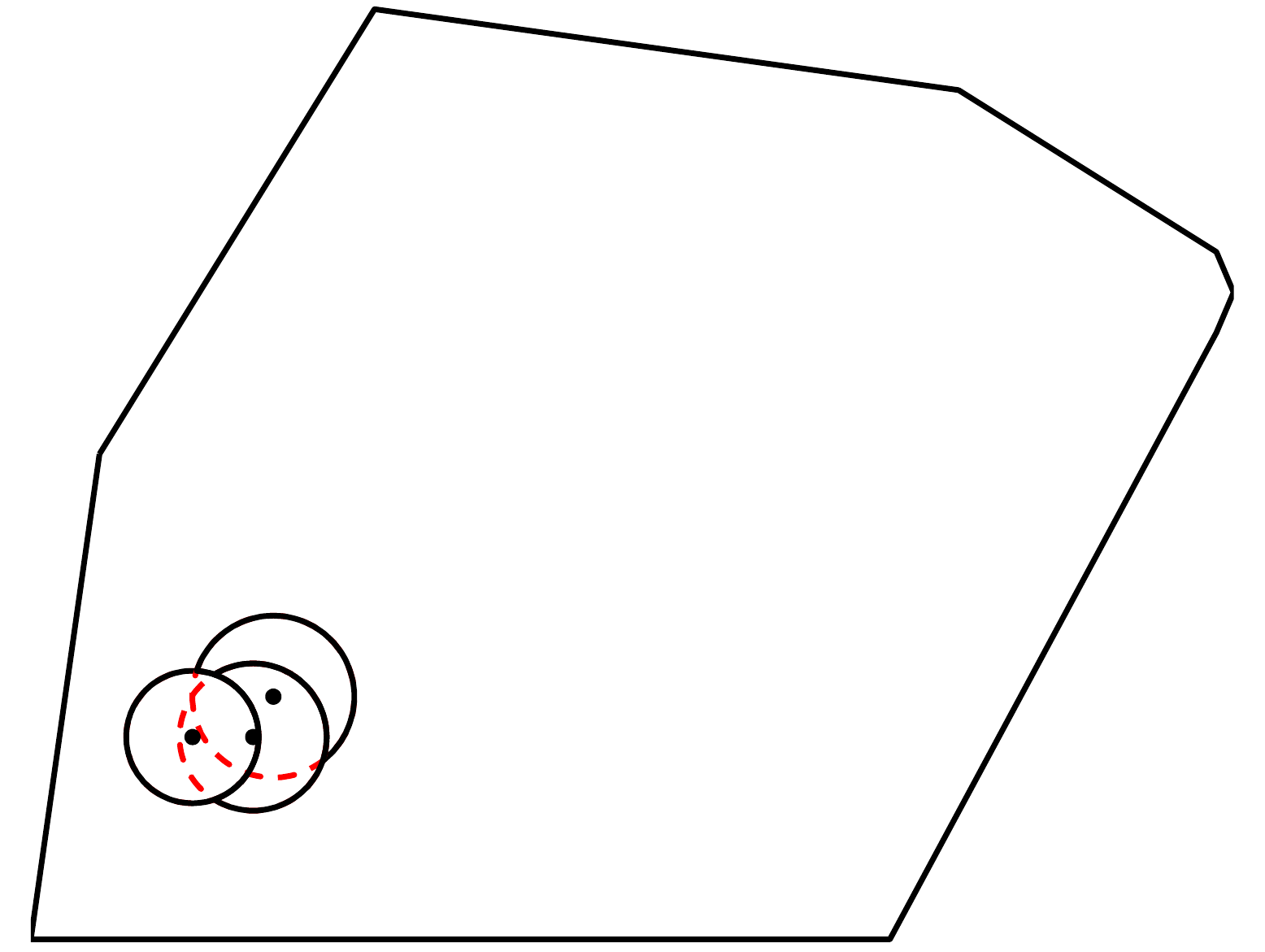}\hspace{0.01cm}
		\includegraphics[width=0.23\textwidth]{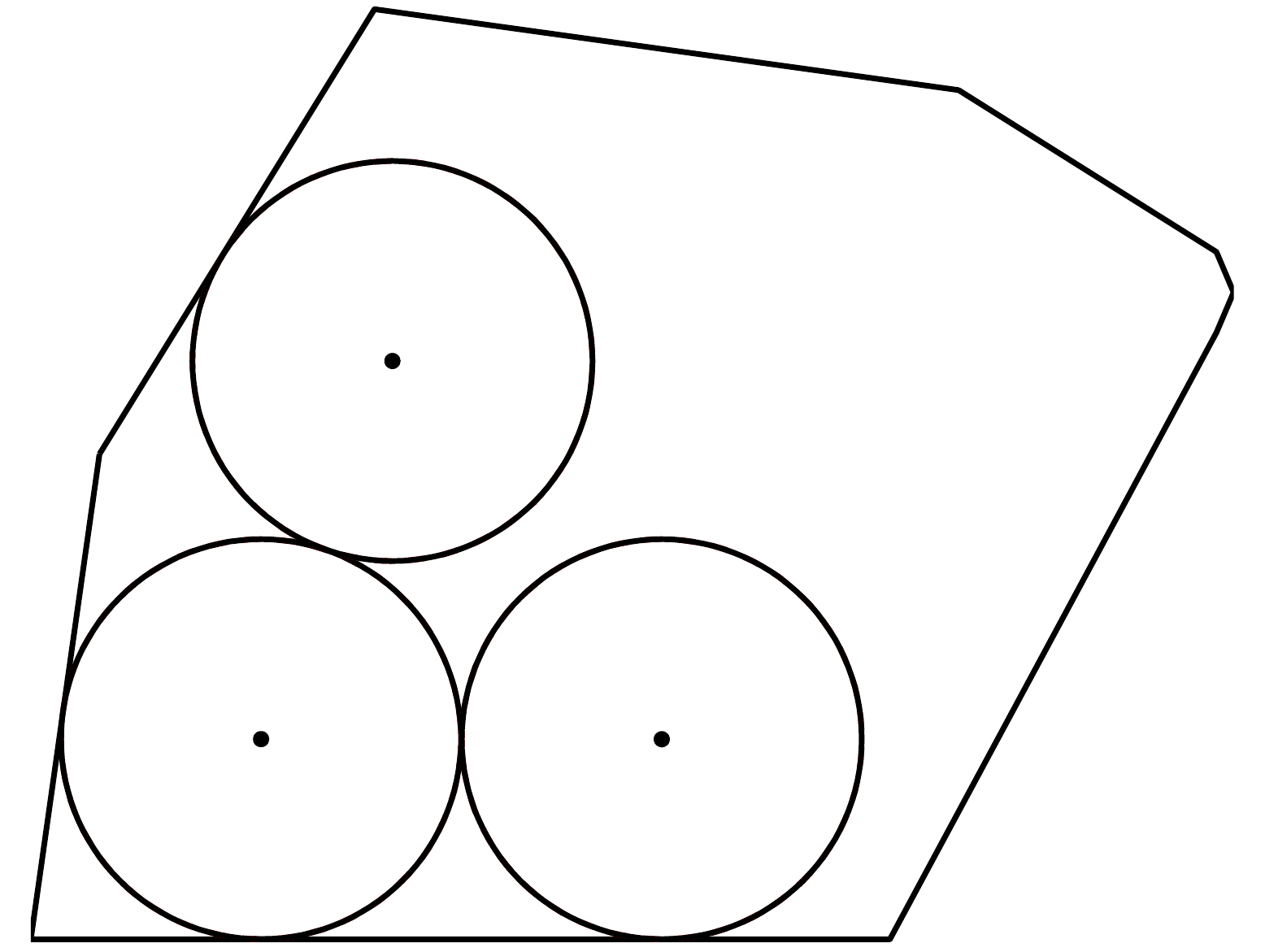}
		\caption{Initial [Left] and final [Right] network configuration and space partitioning.}
		\label{fig:uniform_3_nodes_2D}
	\end{figure}
	
	\begin{figure}[htbp]
		\centering
		\includegraphics[width=0.23\textwidth]{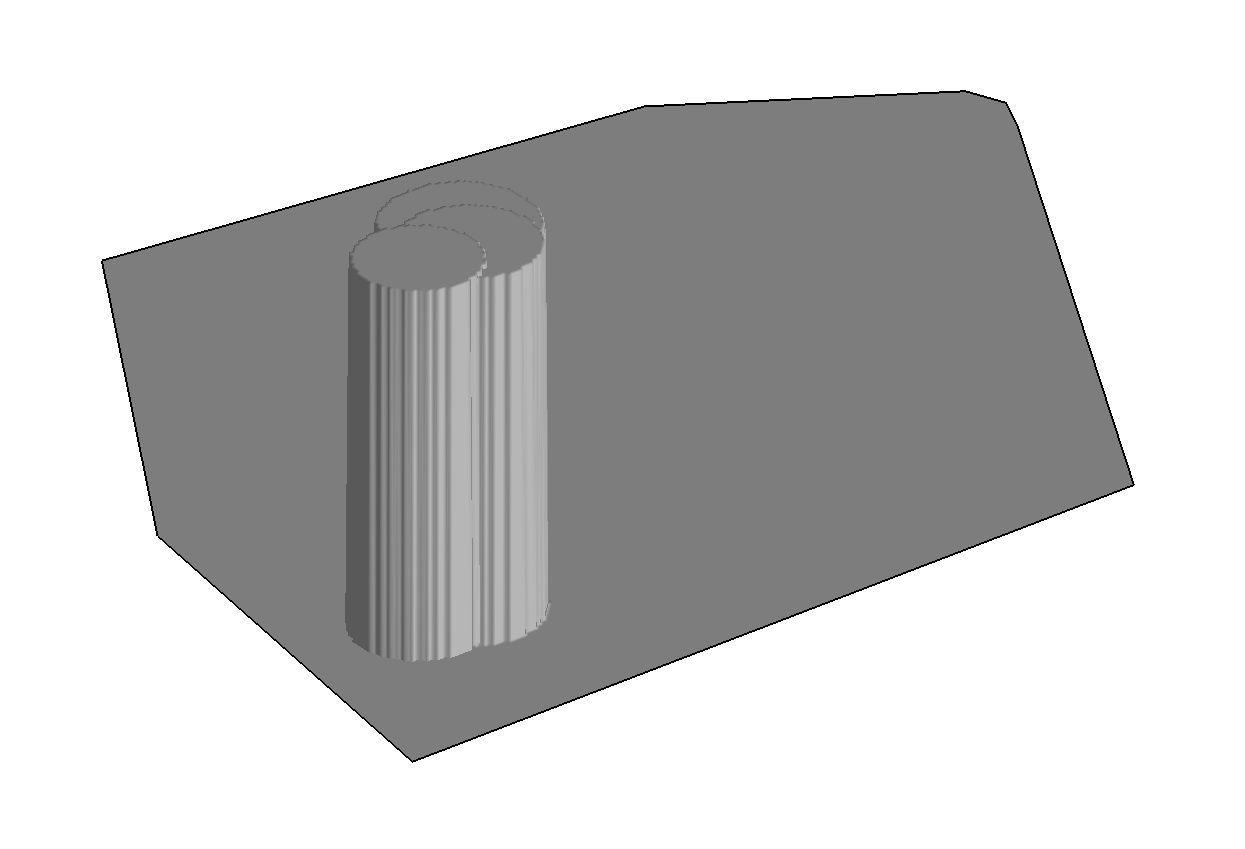}\hspace{0.01cm}
		\includegraphics[width=0.23\textwidth]{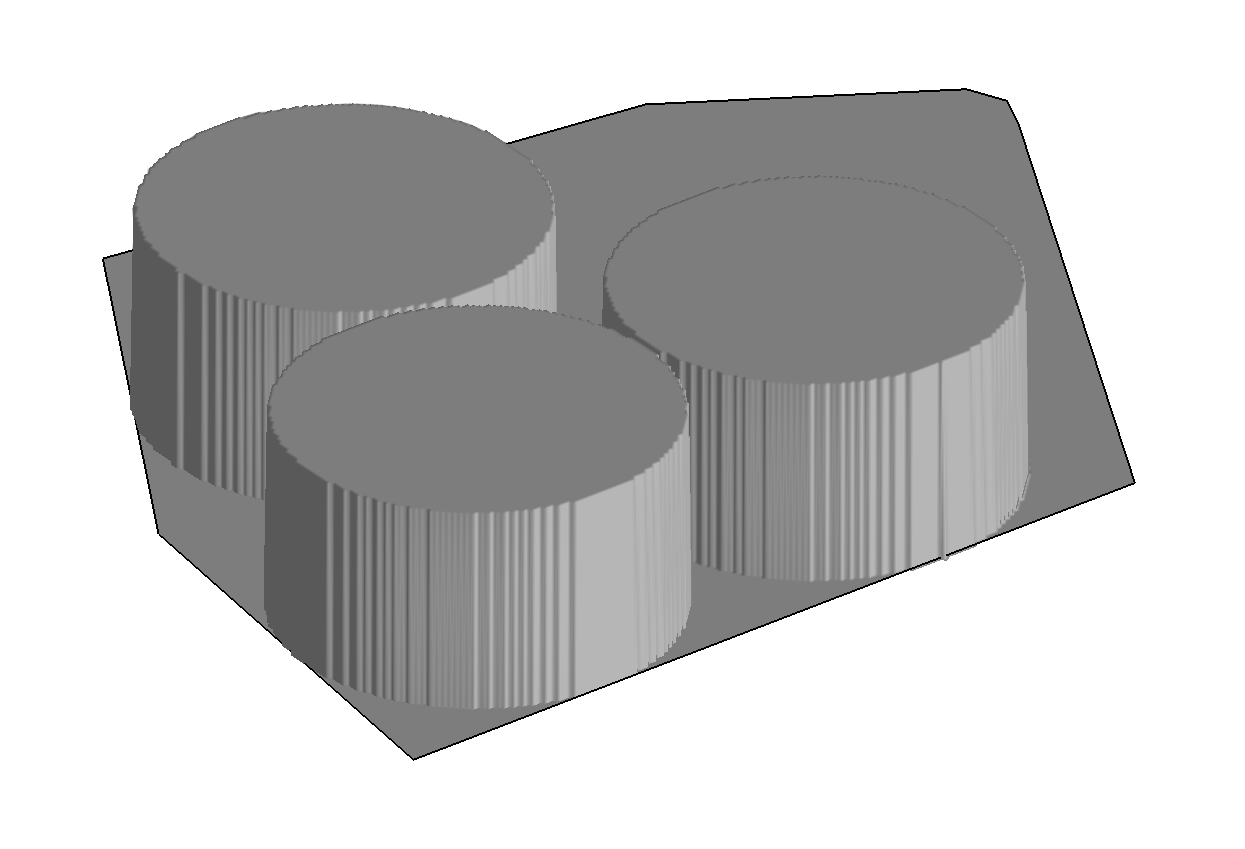}
		\caption{Initial [Left] and final [Right] coverage quality.}
		\label{fig:uniform_3_nodes_3D}
	\end{figure}
	
	\begin{figure}[htbp]
		\centering
		\includegraphics[width=0.4\textwidth]{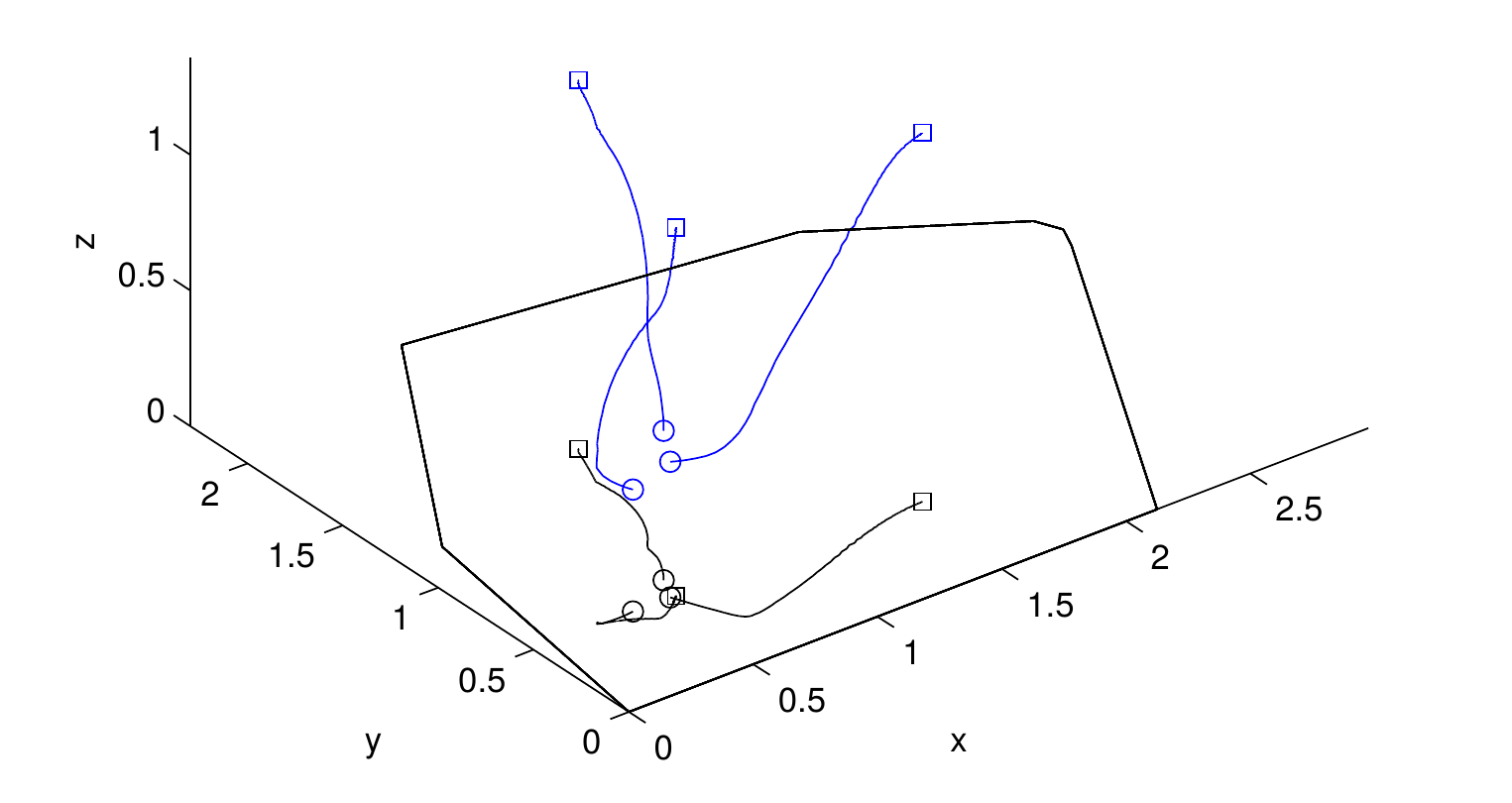}
		\caption{Node trajectories (blue) and their projections on the sensed region (black).}
		\label{fig:uniform_3_nodes_traj}
	\end{figure}
	
	\begin{figure}[htb]
		\centering
		\includegraphics[width=0.23\textwidth]{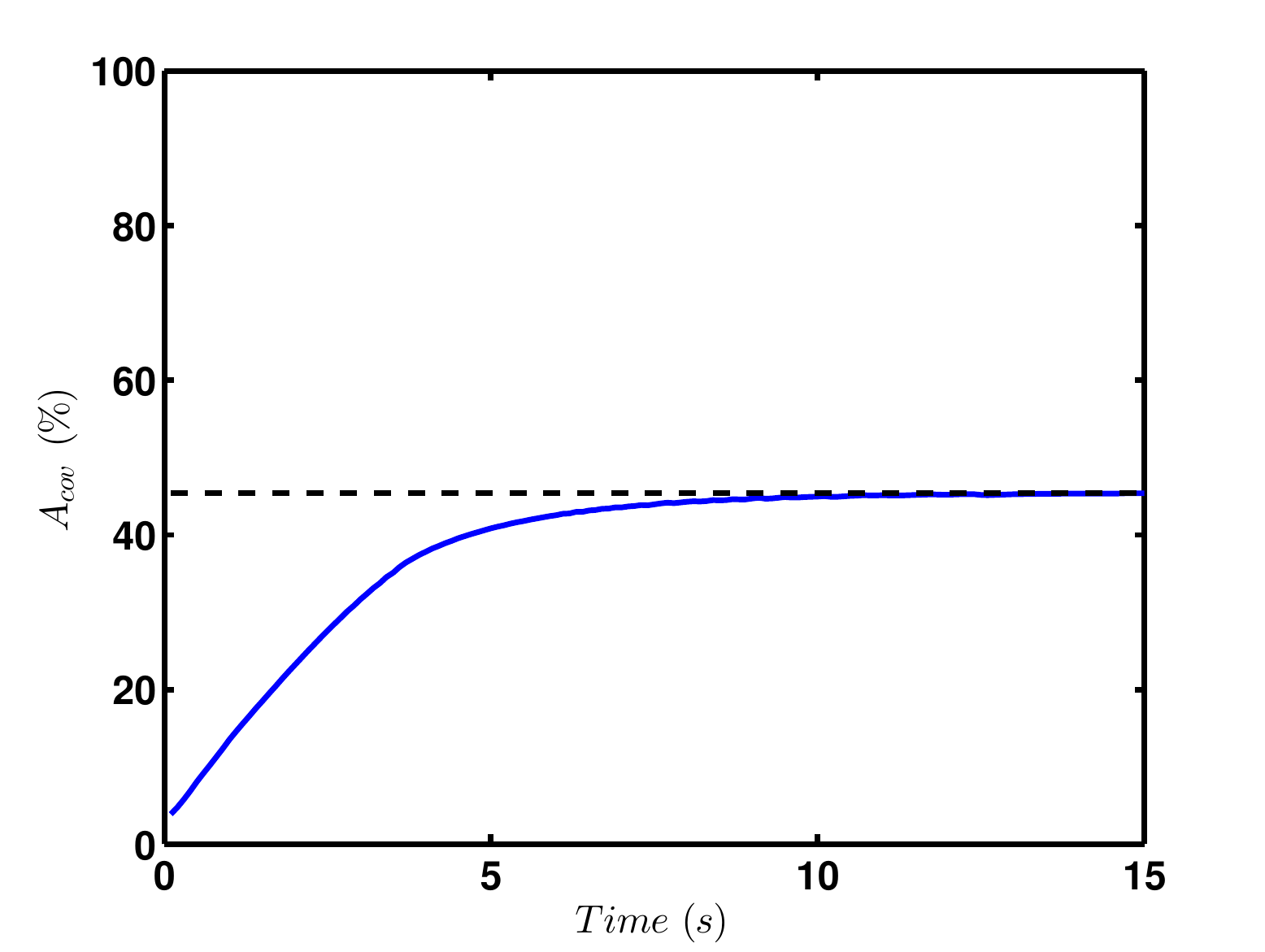}\hspace{0.01cm}
		\includegraphics[width=0.23\textwidth]{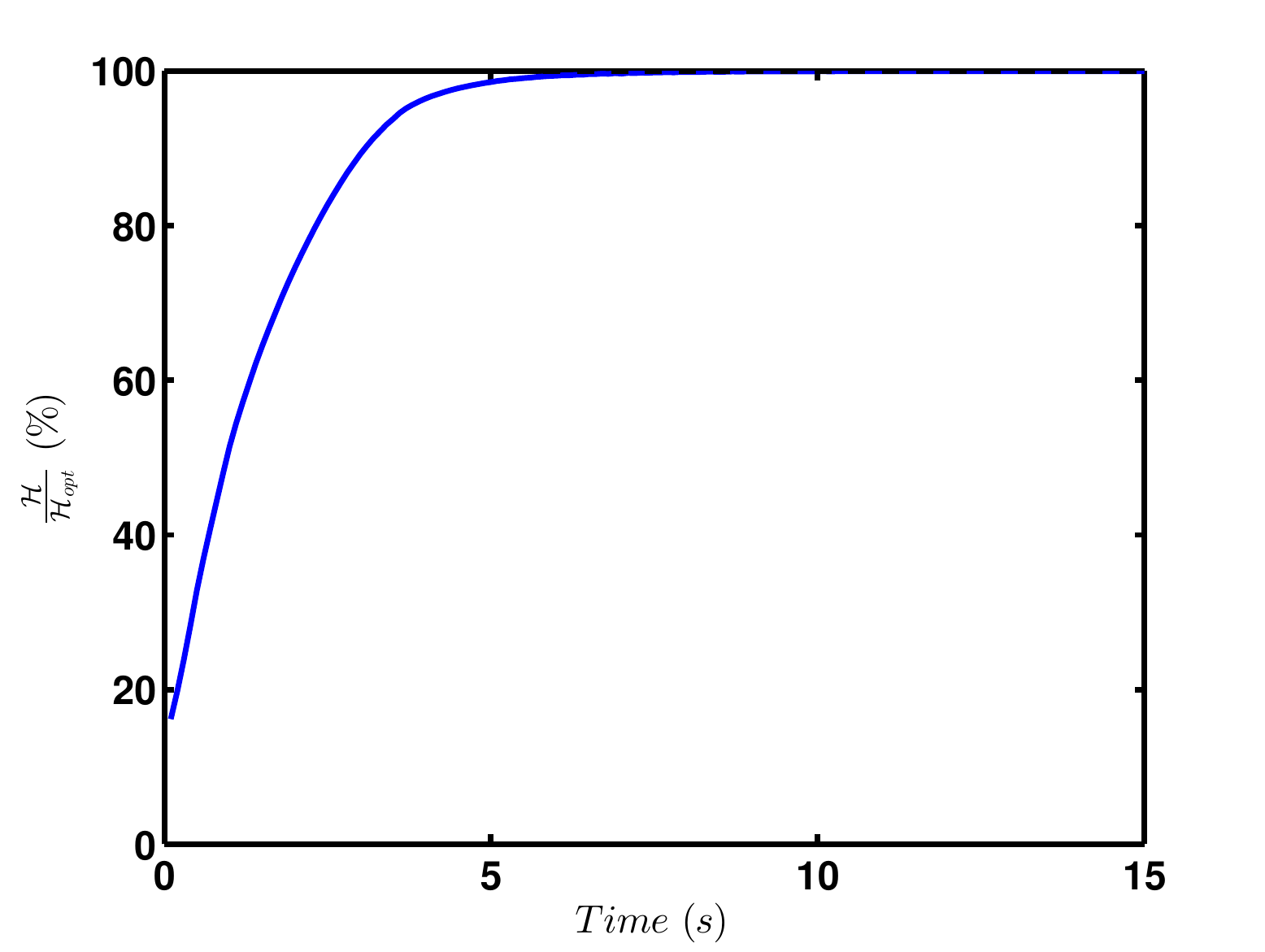}
		\caption{$\frac{\mathcal{A}\left( \bigcup_{i \in I_n} C_i^s \right)}{\mathcal{A}(\Omega)}$ [Left] and $\frac{\mathcal{H}}{\mathcal{H}_{opt}}$ [Right].}
		\label{fig:uniform_3_nodes_area}
	\end{figure}

	\subsection{Case Study II}
	A network of nine nodes, identical to those in Case Study I, is examined in this simulation with an initial configuration as seen in Figure \ref{fig:uniform_9_nodes_2D} [Left]. The region $\Omega$ is not large enough to contain these nine $C_{i,opt}^s$ disks and so the nodes converge at different altitudes below $z^{opt}$. This is why the covered area never reaches $\mathcal{A}\left( \bigcup_{i \in I_n} C_i^s \right)$, which is larger than $\mathcal{A}(\Omega)$ and why $\mathcal{H}$ never reaches $\mathcal{H}_{opt}$, as seen in Figure \ref{fig:uniform_9_nodes_area}. It can be clearly seen though from Figure \ref{fig:uniform_9_nodes_2D} [Right] and Figure \ref{fig:uniform_9_nodes_area} [Left] that the network covers a significant portion of $\Omega$ with better quality than Case Study I. The volume of the cylinders in Figure \ref{fig:uniform_9_nodes_3D} [Right] has reached a local optimum. The trajectories of the UAVs in $\mathbb{R}^3$ can be seen in Figure \ref{fig:uniform_9_nodes_traj} in blue and their projections on the region of interest in black. It can be seen from the trajectories that the altitude of some nodes was not constantly increasing. This is expected behavior since nodes at lower altitude will increase the stable altitude of nodes at higher altitude they share sensed regions with. Once they no longer share sensed regions, or share a smaller portion, the stable altitude of the upper node will decrease, leading to a decrease in their altitude.

	% % % % % % % % % % % % FIGURES 9 NODES % % % % % % % % % % % %
		\begin{figure}[htb]
			\centering
			\includegraphics[width=0.23\textwidth]{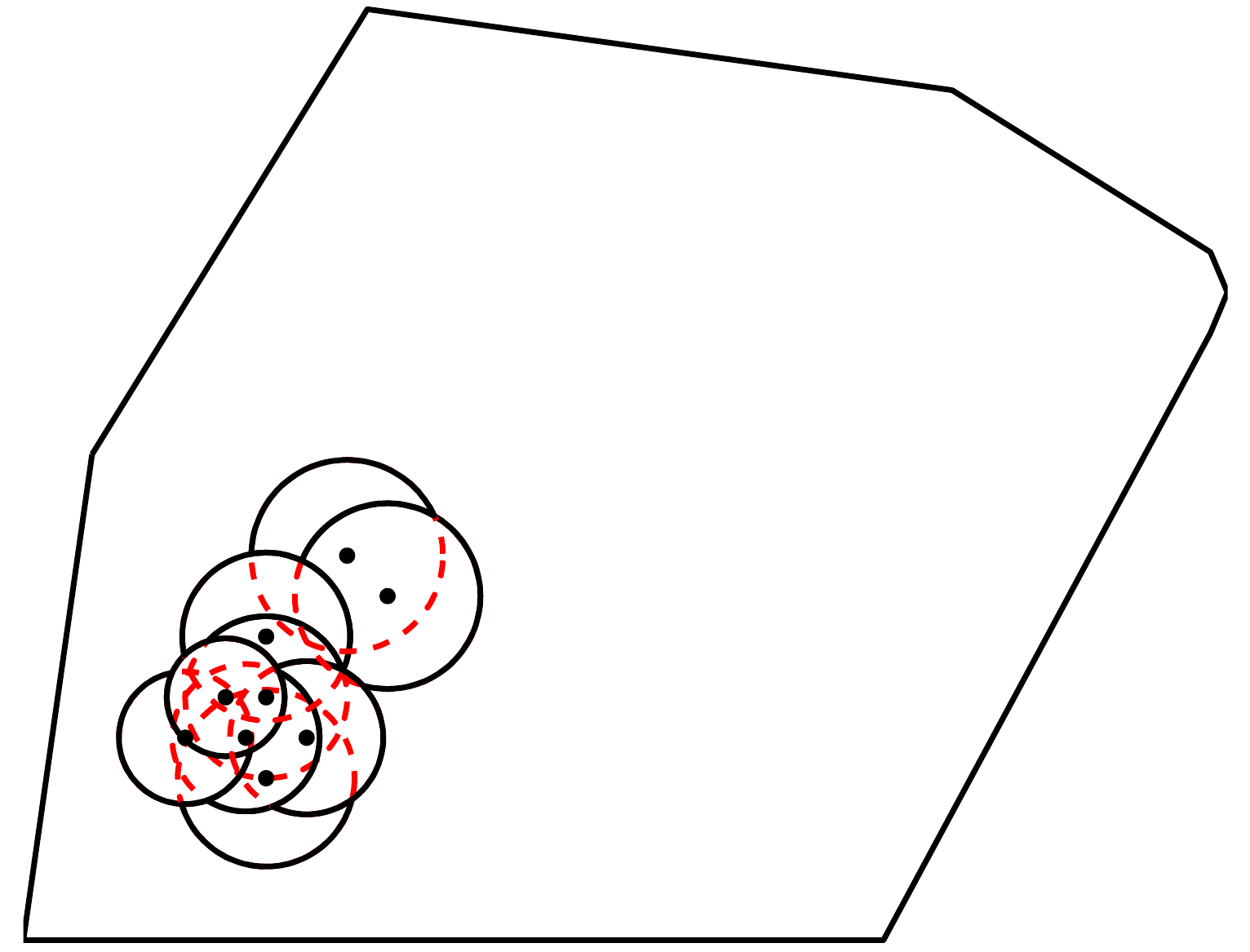}\hspace{0.01cm}
			\includegraphics[width=0.23\textwidth]{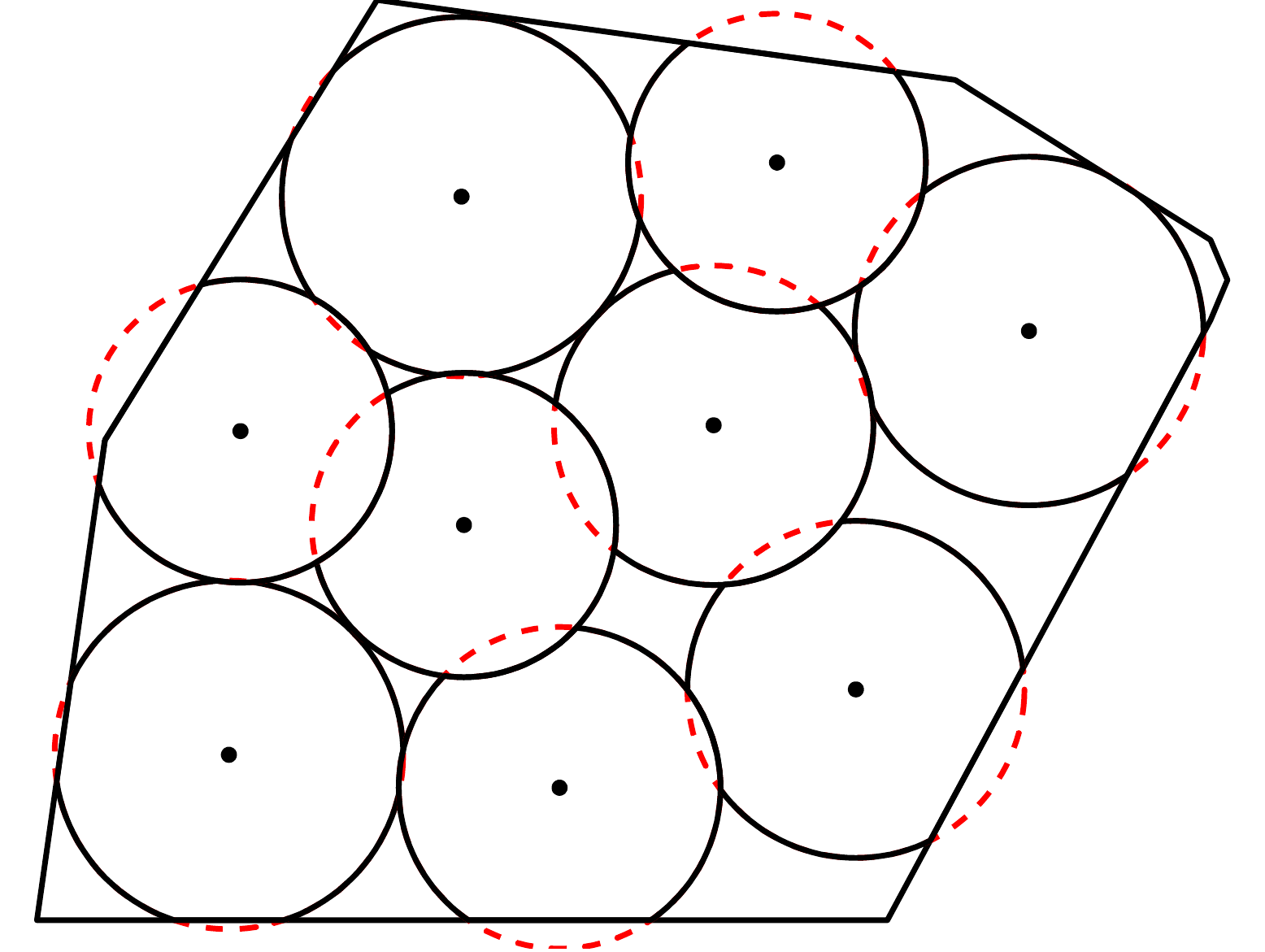}
			\caption{Initial [Left] and final [Right] network configuration and space partitioning.}
			\label{fig:uniform_9_nodes_2D}
		\end{figure}
		
		\begin{figure}[htb]
			\centering
			\includegraphics[width=0.23\textwidth]{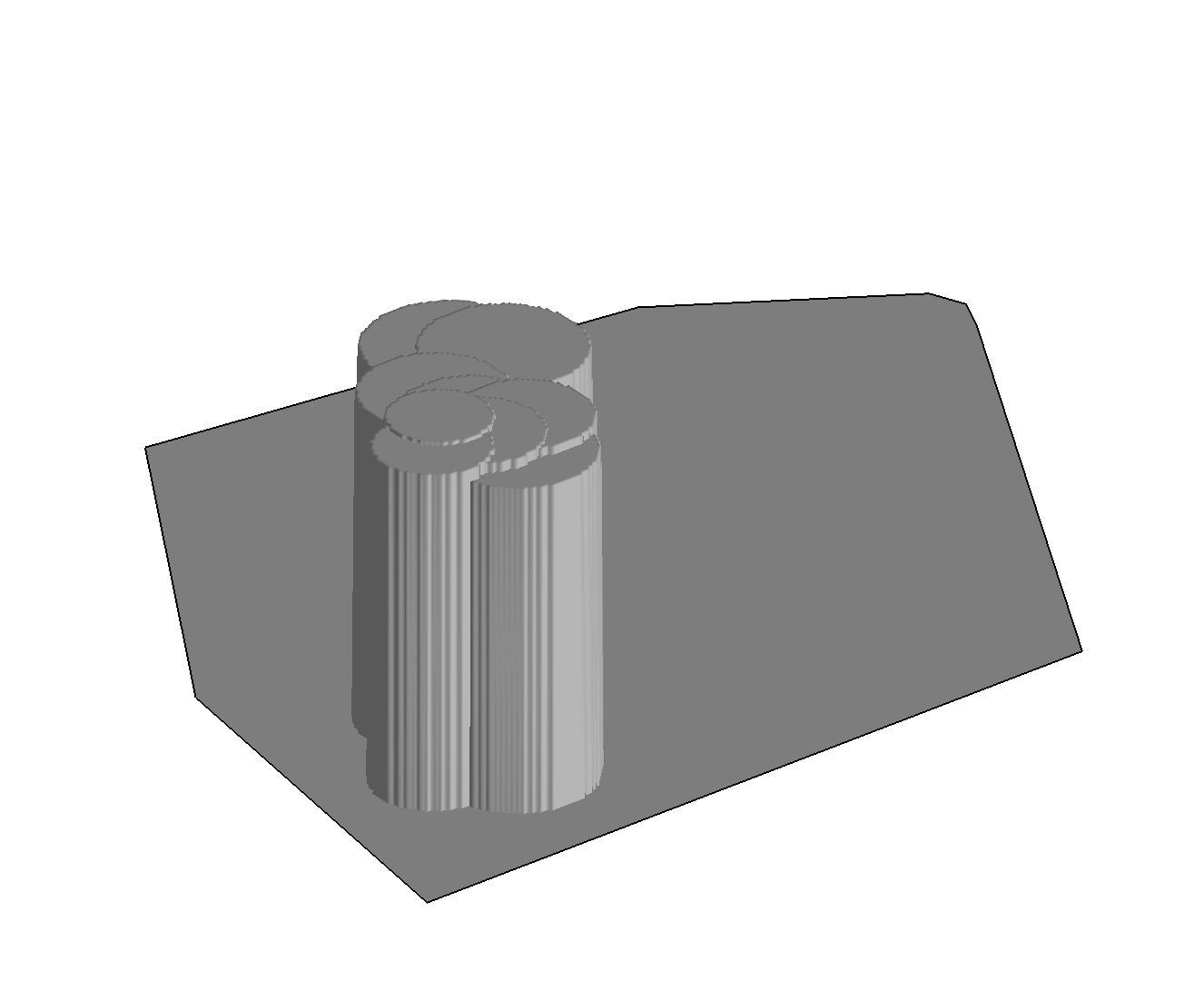}\hspace{0.01cm}
			\includegraphics[width=0.23\textwidth]{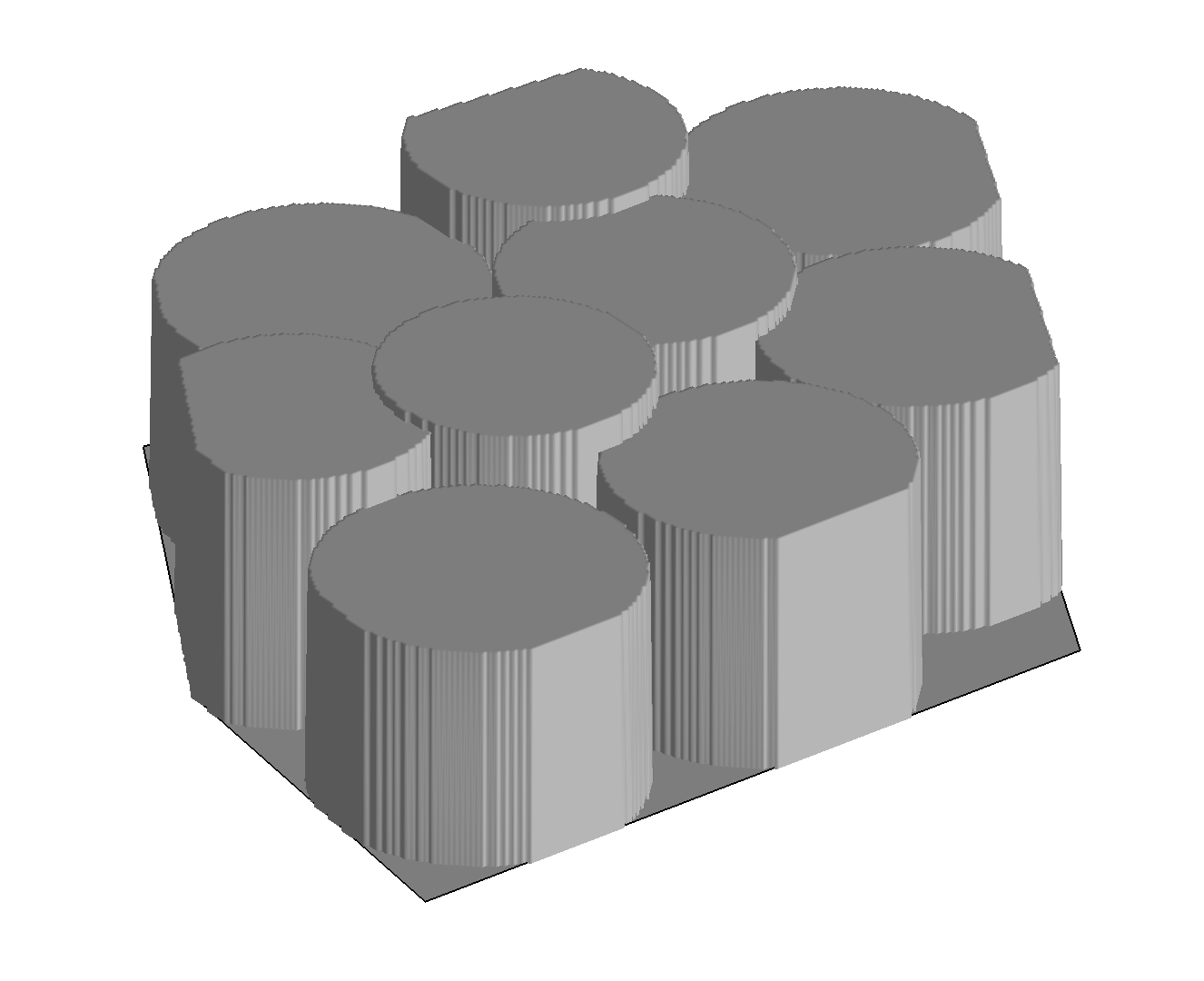}
			\caption{Initial [Left] and final [Right] coverage quality.}
			\label{fig:uniform_9_nodes_3D}
		\end{figure}
		
		\begin{figure}[htb]
			\centering
			\includegraphics[width=0.4\textwidth]{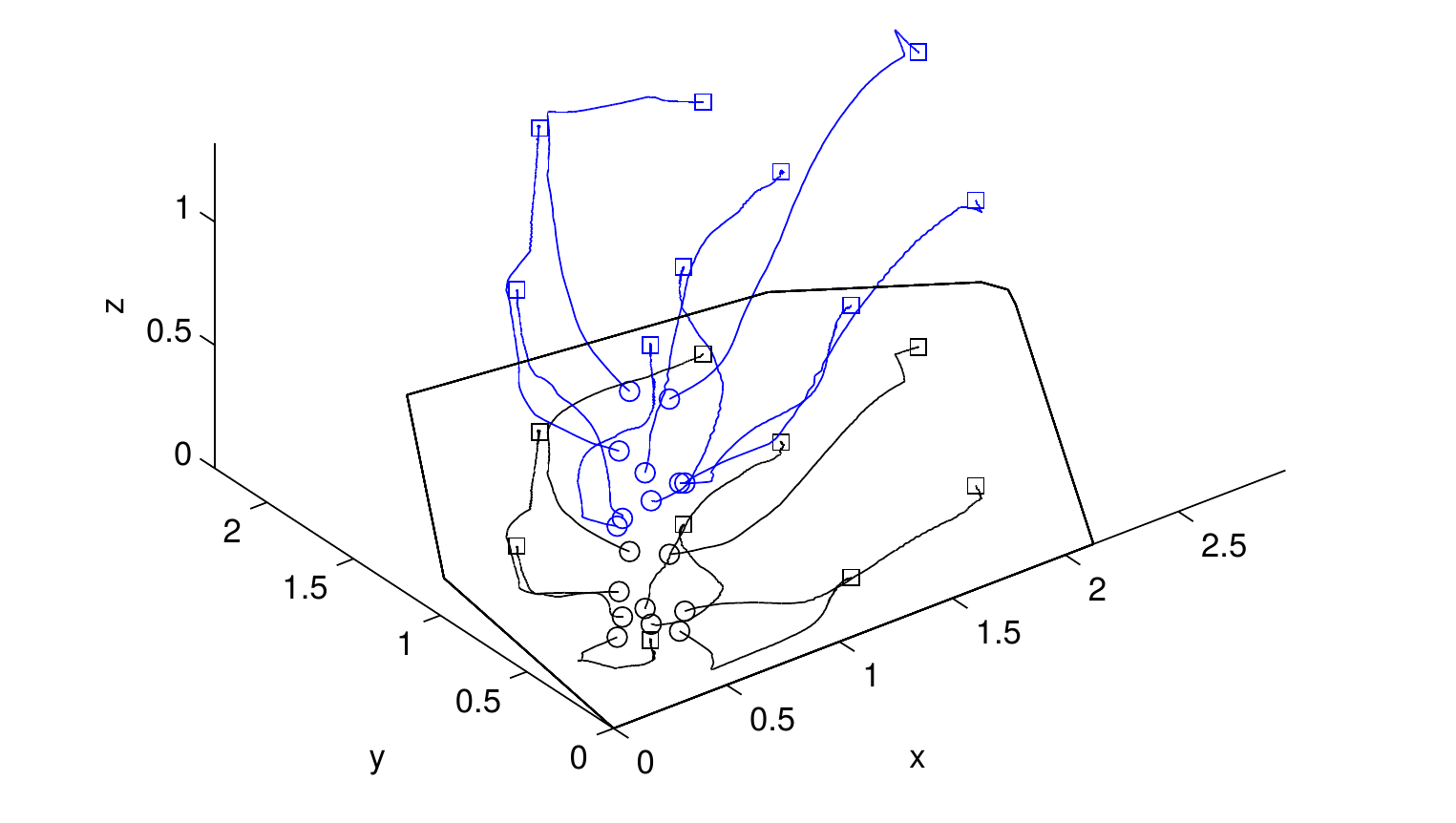}	
			\caption{Node trajectories (blue) and their projections on the sensed region (black).}
			\label{fig:uniform_9_nodes_traj}
		\end{figure}
		
		\begin{figure}[htb]
			\centering
			\includegraphics[width=0.23\textwidth]{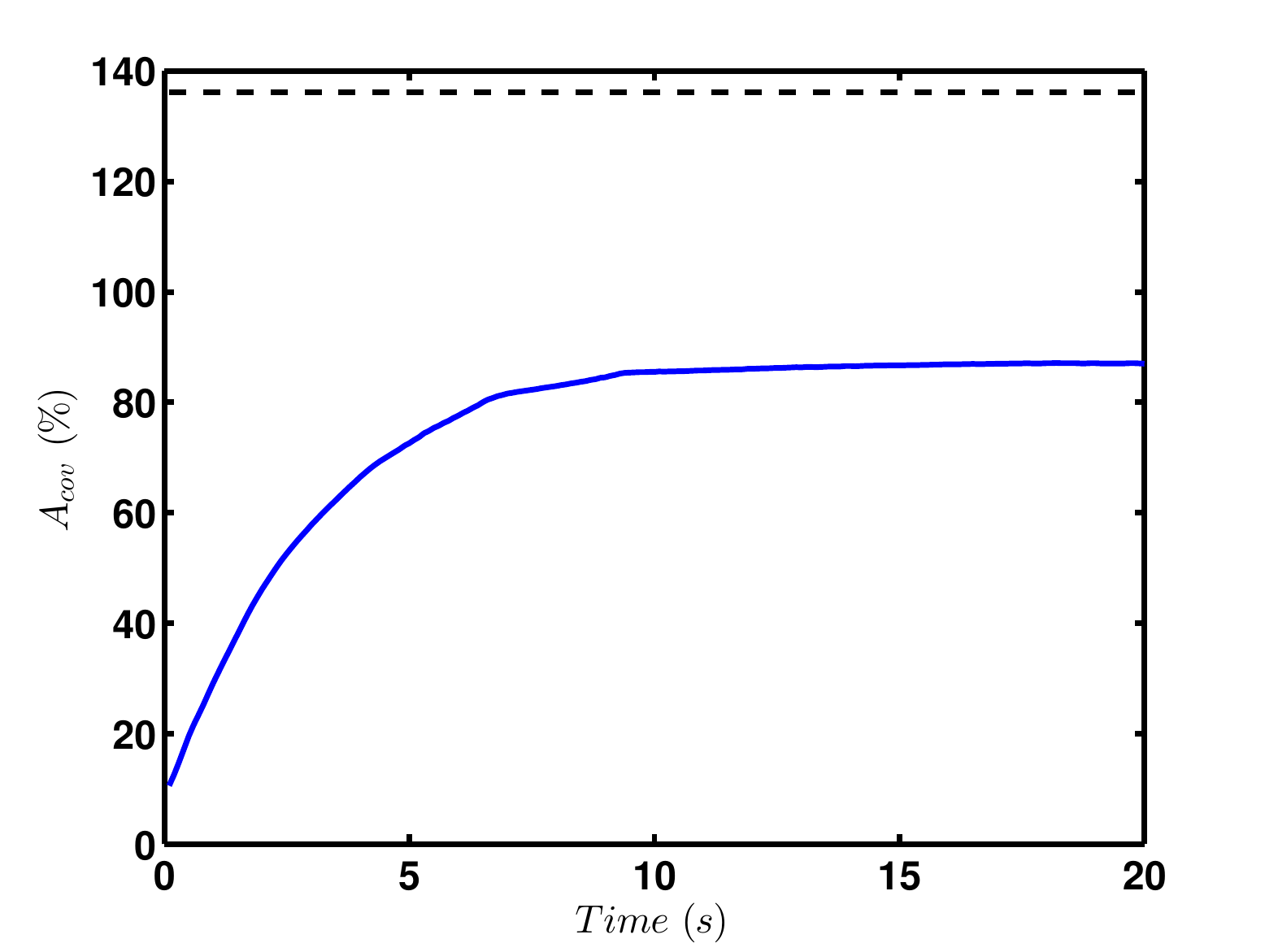}\hspace{0.01cm}
			\includegraphics[width=0.23\textwidth]{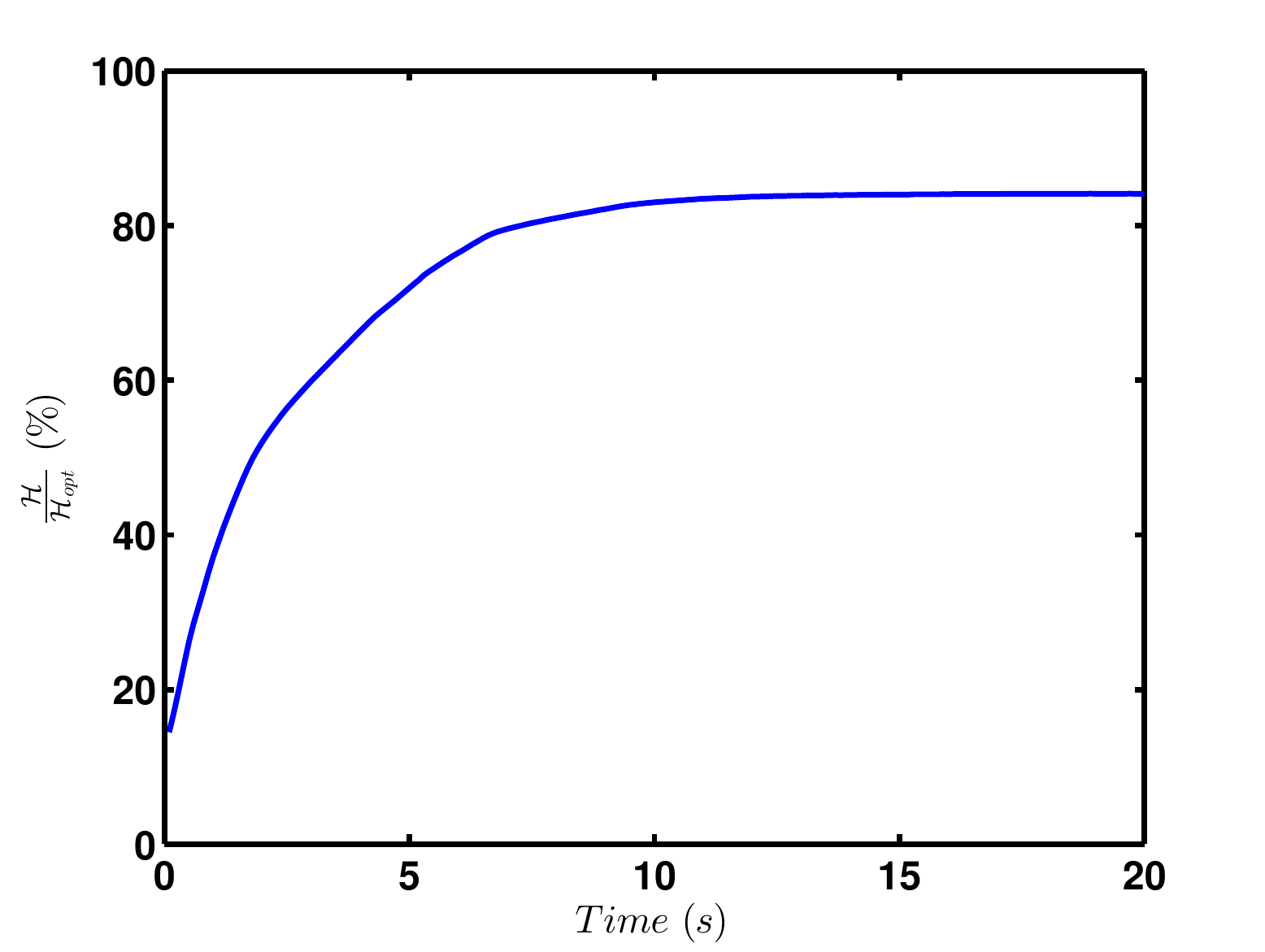}
		\caption{$\frac{\mathcal{A}\left( \bigcup_{i \in I_n} C_i^s \right)}{\mathcal{A}(\Omega)}$ [Left] and $\frac{\mathcal{H}}{\mathcal{H}_{opt}}$ [Right].}
			\label{fig:uniform_9_nodes_area}
		\end{figure}
	
%%%%%%%%%%%%%%%%%%%%%%%%%%%%%%%%%%%%%%%%%

\section{Conclusions}
Area coverage by a network of UAVs has been studied in this article by use of a combined coverage-quality metric. A partitioning scheme based on coverage quality is employed to assign each UAV an area of responsibility. The proposed control law leads the network to a locally optimal configuration which provides a compromise between covered area and coverage quality. It also guarantees that the altitude of all UAVs will remain within a predefined range, thus avoiding potential obstacles while also keeping the UAVs below their maximum operational altitude. Simulation studies are presented to indicate the efficiency of the proposed control algorithm.

\section*{APPENDIX}

The parametric equation of the boundary of the sensing disk $C_{i}^{s}(X_i,a)$ defined in (\ref{sensing}) is
\begin{equation*}
\gamma(k) \colon~ \left[\begin{array}{c} x \\ y \end{array}\right] = \left[\begin{array}{c} x_i + z_i \tan(a) ~\cos(k) \\ y_i + z_i \tan(a) ~\sin(k) \end{array}\right], ~k \in [0,2\pi)
\label{circle_parametric}
\end{equation*}

We will now evaluate $n_i$, $\upsilon_i^i(q)$ and $\nu_i^i(q)$ on $\partial W_i \cap \partial\mathcal{O}$ which is always an arc of the circle $\gamma(k)$.

The normal vector $n_i$ is given by
\begin{equation*}
n_i = \left[\begin{array}{c} \cos(k) \\ \sin(k) \end{array}\right], ~k \in [0,2\pi)
\end{equation*}

It can be easily shown that
\begin{equation*}
\upsilon_i^i(q) = \left[\begin{array}{cc} \frac{\partial x}{\partial x_i} & \frac{\partial x}{\partial y_i} \\ \frac{\partial y}{\partial x_i} & \frac{\partial y}{\partial y_i} \end{array}\right] = \mathbb{I}_2
\end{equation*}
and similarly that
\begin{equation*}
\nu_i^i(q) = \left[\begin{array}{c} \frac{\partial x}{\partial z_i} \\ \frac{\partial y}{\partial z_i} \end{array}\right] = \left[\begin{array}{c} \tan(a) \cos(k) \\ \tan(a) \sin(k) \end{array}\right].
\end{equation*}
It is now clear that
\begin{equation*}
\nu_i^i(q) \cdot n_i = \tan(a).
\end{equation*}

The evaluation of $n_i$, $\upsilon_i^i(q)$ and $\nu_i^i(q)$ on $\partial W_j \cap \partial W_i$ depends on the choice of $f(q)$ and the partitioning scheme described in Section \ref{section:partitioning}. 

When $f_i(q) = f_j(q)$, the evaluation of $n_i$, $\upsilon_i^i(q)$ and $\nu_i^i(q)$ is irrelevant since the corresponding integral will be $0$ due to the $f_i(q) - f_j(q)$ term.

When $f_i(q) > f_j(q)$, due to the partition scheme, the common sensed region between nodes $i$ and $j$ will be assigned to node $i$. As a result, $\partial W_i \cap \partial W_j$ will be an arc of $\partial C_i^s$ and will change when node $i$ moves. Thus the evaluation of $n_i$, $\upsilon_i^i(q)$ and $\nu_i^i(q)$ will be the same as their evaluation over $\partial W_i \cap \partial\mathcal{O}$ described previously.

When $f_i(q) < f_j(q)$, due to the partition scheme, the common sensed region between nodes $i$ and $j$ will be assigned to node $j$. As a result, $\partial W_i \cap \partial W_j$ will be an arc of $\partial C_j^s$ which will not change when node $i$ moves. Thus $n_i$, $\upsilon_i^i(q)$ and $\nu_i^i(q)$ are both zero in this case.

It is thus concluded that for the integrals over $\partial W_i \cap \partial W_j$ of node $i$, only arcs where $f_i(q) > f_j(q)$ need to be considered when integrating.

\bibliographystyle{IEEEtran}
%\bibliography{bibliography/ysphdbook,bibliography/refs1}

% Generated by IEEEtran.bst, version: 1.14 (2015/08/26)
\begin{thebibliography}{}
\providecommand{\url}[1]{#1}
\csname url@samestyle\endcsname
\providecommand{\newblock}{\relax}
\providecommand{\bibinfo}[2]{#2}
\providecommand{\BIBentrySTDinterwordspacing}{\spaceskip=0pt\relax}
\providecommand{\BIBentryALTinterwordstretchfactor}{4}
\providecommand{\BIBentryALTinterwordspacing}{\spaceskip=\fontdimen2\font plus
\BIBentryALTinterwordstretchfactor\fontdimen3\font minus
  \fontdimen4\font\relax}
\providecommand{\BIBforeignlanguage}[2]{{%
\expandafter\ifx\csname l@#1\endcsname\relax
\typeout{** WARNING: IEEEtran.bst: No hyphenation pattern has been}%
\typeout{** loaded for the language `#1'. Using the pattern for}%
\typeout{** the default language instead.}%
\else
\language=\csname l@#1\endcsname
\fi
#2}}
\providecommand{\BIBdecl}{\relax}
\BIBdecl

\end{thebibliography}


\begin{thebibliography}{10}
\providecommand{\url}[1]{#1}
\csname url@samestyle\endcsname
\providecommand{\newblock}{\relax}
\providecommand{\bibinfo}[2]{#2}
\providecommand{\BIBentrySTDinterwordspacing}{\spaceskip=0pt\relax}
\providecommand{\BIBentryALTinterwordstretchfactor}{4}
\providecommand{\BIBentryALTinterwordspacing}{\spaceskip=\fontdimen2\font plus
\BIBentryALTinterwordstretchfactor\fontdimen3\font minus
  \fontdimen4\font\relax}
\providecommand{\BIBforeignlanguage}[2]{{%
\expandafter\ifx\csname l@#1\endcsname\relax
\typeout{** WARNING: IEEEtran.bst: No hyphenation pattern has been}%
\typeout{** loaded for the language `#1'. Using the pattern for}%
\typeout{** the default language instead.}%
\else
\language=\csname l@#1\endcsname
\fi
#2}}
\providecommand{\BIBdecl}{\relax}
\BIBdecl

\bibitem{Cortes_ESAIMCOCV05}
J.~Cort\'{e}s, S.~Martinez, and F.~Bullo, ``Spatially-distributed coverage
  optimization and control with limited-range interactions,'' \emph{ESAIM:
  Control, Optimisation and Calculus of Variations}, vol.~11, no.~4, pp.
  691--719, 2005.

\bibitem{Pimenta_CDC08}
L.~Pimenta, V.~Kumar, R.~Mesquita, and G.~Pereira, ``Sensing and coverage for a
  network of heterogeneous robots,'' in \emph{Proc. of the $47^{th}$ Conference
  on Decision and Control}, Cancun, Mexico, 2008, pp. 3947--3952.

\bibitem{Stergiopoulos_IETCTA10}
Y.~{S}tergiopoulos and A.~Tzes, ``Convex {V}oronoi-inspired space partitioning
  for heterogeneous networks: A coverage-oriented approach,'' \emph{IET Control
  Theory and Applications}, vol.~4, no.~12, pp. 2802--2812, 2010.

\bibitem{Arslan_ICRA2016}
O.~Arslan and D.~E.~Koditschek, ``Voronoi-based coverage control of heterogeneous disk-shaped robots,'' in \emph{Proc. of the IEEE International Conference on Robotics and Automation,
 (ICRA 2016)}, 2016, pp. 4259--4266, 2016.
 
\bibitem{Stergiopoulos_Automatica13}
Y.~Stergiopoulos and A.~Tzes, ``Spatially distributed area coverage
  optimisation in mobile robotic networks with arbitrary convex anisotropic
  patterns,'' \emph{Automatica}, vol.~49, no.~1, pp. 232--237, 2013.

\bibitem{Kantaros_Automatica15}
Y.~Kantaros, M.~Thanou, and A.~Tzes, ``Distributed coverage control for concave
  areas by a heterogeneous robot-swarm with visibility sensing constraints,''
  \emph{Automatica}, vol.~53, pp. 195--207, 2015.

\bibitem{Stergiopoulos_ICRA14}
Y.~Stergiopoulos and A.~Tzes, ``Cooperative positioning/orientation control of
  mobile heterogeneous anisotropic sensor networks for area coverage,'' in
  \emph{Proc. IEEE International Conference on Robotics \& Automation}, Hong
  Kong, China, 2014, pp. 1106--1111.

\bibitem{Stergiopoulos_IEEETAC15}
Y.~Stergiopoulos, M.~Thanou, and A.~Tzes, ``Distributed collaborative
  coverage-control schemes for non-convex domains,'' \emph{IEEE Transactions on
  Automatic Control}, vol.~60, no.~9, pp. 2422--2427, 2015.

\bibitem{Renzaglia_IJRR12}
A.~Renzaglia, L.~Doitsidis, E.~Martinelli, and E.~Kosmatopoulos, ``Muti-robot
  three-dimensional coverage of unknwon areas,'' \emph{The International
  Journal of Robotics Research}, vol.~31, no.~6, pp. 738--752, 2012.

\bibitem{Breitenmoser_IROS10}
A.~Breitenmoser, J.~Metzger, R.~Siegwart, and D.~Rus, ``Distributed coverage
  control on surfaces in {3D} space,'' in \emph{Proc. of the 2010 IEEE
  International Conference on Intelligent Robots and Systems}, Taipei, Taiwan,
  2010, pp. 5569--5576.

\bibitem{Thanou_ISCCSP14}
M.~Thanou and A.~Tzes, ``Distributed visibility-based coverage using a swarm of
  uavs in known {3D}-terrains,'' in \emph{Proc. of the International Symposium
  on Communications, Control and Signal Processing (ISCCSP 2014)}, Athens,
  Greece, 2014, pp. 458--461.

\bibitem{Schwager_ICRA2009}
M.~Schwager, B.~J. Julian, and D.~Rus, ``Optimal coverage for multiple hovering
  robots with downward facing cameras,'' in \emph{Proc. of the IEEE International Conference on Robotics and Automation,
 (ICRA 2009)}, 2009, pp. 3515--3522.

\bibitem{Schwager_IEEE2011}
M.~Schwager, B.~J. Julian, M.~Angermann, and D.~Rus, ``Eyes in the sky:
  Decentralized control for the deployment of robotic camera networks,''
  \emph{Proceedings of the IEEE}, vol.~99, no.~9, pp. 1541--1561, 2011.

\bibitem{Flanders_AMM73}
H.~Flanders, ``Differentiation under the integral sign,'' \emph{American
  Mathematical Monthly}, vol.~80, no.~6, pp. 615--627, 1973.

\bibitem{DiFranco_JIRS16}
C.~Di~Franco and G.~Buttazzo, ``Coverage path planning for uavs photogrammetry
  with energy and resolution constraints,'' \emph{Journal of Intelligent \&
  Robotic Systems}, pp. 1--18, 2016.

\bibitem{Avellar_S2015}
G.~Avellar, G.~Pereira, L.~Pimenta and P.~Iscold, ``Multi-UAV Routing for Area Coverage and Remote Sensing with Minimum Time,'' \emph{Sensors}, vol.~15, no.~11, pp. 27783--27803, 2015.
  
\end{thebibliography}
 %Generated by IEEEtran.bst, version: 1.14 (2015/08/26)

\end{document}